\newtheorem{lemma}{Lemma}
\newtheorem{proposition}{Proposition}
\newtheorem{remark}{Remark}
\def\bff{{\bf f}}
\def\bh{{\bf h}}
\def\bw{{\bf w}}
\def\bF{{\bf F}}
\def\bH{{\bf H}}
\def\bW{{\bf W}}
\begin{document}

\title{Rate-Splitting Multiple Access for Quantized Multiuser MIMO Communications}

\author{Seokjun Park, {\it Student Member, IEEE}, Jinseok Choi,  {\it Member, IEEE}, Jeonghun~Park,  {\it Member, IEEE}, Wonjae Shin, {\it Senior Member, IEEE},  and \\ Bruno Clerckx, {\it Fellow, IEEE} 

\thanks{
S. Park and J. Choi are with Department of Electrical Engineering, Ulsan National Institute of Science and Technology, South Korea (e-mail: {\texttt{\{seokjunpark, jinseokchoi\}@unist.ac.kr}}).

J. Park is with the School of Electronics Engineering, Kyungpook National University, South Korea (e-mail: {\texttt{jeonghun.park@knu.ac.kr}}). 

W. Shin is with the Department of Electrical and Computer Engineering, Ajou University, Suwon 16499, South Korea (email: {\texttt{wjshin@ajou.ac.kr}}).

B. Clerckx is with the Communications and Signal Processing Group, Department of Electrical and Electronic Engineering, Imperial College London, London SW7 2AZ, U.K. (email: {\texttt{b.clerckx@imperial.ac.uk}}).
}

}
\maketitle \setcounter{page}{1} 
\begin{abstract}
This paper investigates the sum spectral efficiency maximization problem in downlink multiuser multiple-input multiple-output (MIMO) systems with low-resolution quantizers  at an access point (AP) and users.
In particular, we consider rate-splitting multiple access (RSMA) to enhance spectral efficiency by offering opportunities to boost achievable degrees of freedom.
Optimizing RSMA precoders, however, is highly challenging  due to the minimum rate constraint when determining the rate of the common stream.
The quantization errors coupled with the precoders further make the problem more complicated and difficult to solve.
In this paper, we develop a novel RSMA precoding algorithm incorporating quantization errors for maximizing the sum spectral efficiency.
To this end, we first obtain an approximate spectral efficiency in a smooth function. 
Subsequently, we derive the first-order optimality condition in the form of the nonlinear eigenvalue
problem (NEP). 
We propose a computationally efficient algorithm to find the principal eigenvector of the NEP as a sub-optimal solution.
Simulation results validate the superior spectral efficiency of the proposed method.
The key benefit of using RSMA over spatial division multiple access (SDMA) comes from the ability of the common stream to balance between the channel gain and quantization error in multiuser MIMO systems with different quantization resolutions.
\end{abstract}
\begin{IEEEkeywords}
   Rate-splitting multiple access, low-resolution quantizers, spectral efficiency, precoding, and nonlinear eigenvalue problem.
\end{IEEEkeywords}

\section{Introduction}

Nowadays, 6G wireless communication has drawn significant attention beyond the era of 5G \cite{yang20196g}.
Accordingly, low-power yet high-speed wireless communications have become more indispensable;
there are applications such as the internet-of-things (IoT) in which devices  tend to be battery-limited and to have low computing capability while requiring high spectral efficiency  \cite{pattar2018searching, minoli2017iot}.
We can alleviate the power consumption issue by utilizing low-power hardware, such as low-resolution analog-to-digital converter (ADC) and digital-to-analog converter (DAC), since the power consumption of the quantizers decreases exponentially according to the reduction of quantization bits \cite{zhang2018low}.
Motivated by the power saving in using low-resolution quantizers, low-resolution quantization systems or mixed-resolution quantization systems where high- and low-resolution quantizers coexist have been widely studied
\cite{zhang2016mixed,liang2016mixed,zhang2018mixed,choi:commmag:20}. 

Another key challenge needed to be addressed for the realization of 6G wireless communications is the severe inter-user interference due to the dramatic increase in the number of smart devices.
In IoT communications, for example, the increase in the number of IoT devices and the high channel correlation among IoT devices \cite{pattar2018searching, minoli2017iot} can incur a significant amount of inter-user interference, thereby leading to considerable performance degradation in spectral efficiency.
In this regard, rate-splitting multiple access (RSMA) introduced in \cite{mao2018rate_bri} for downlink multi-antenna wireless networks as a unified multiple access scheme can be an effective solution in overcoming the limits of the spectral efficiency gain in  multiuser multiple-input multiple-output (MU-MIMO) systems by reducing the inter-user interference \cite{joudeh:16:tsp, dai:twc:16, li:jsac:20,clerckx2016rate}.
In this paper, we consider RSMA for downlink MU-MIMO systems with low-resolution quantizers and develop a novel and computationally efficient precoding method to maximize the spectral efficiency.
\subsection{Prior Work}
The information-theoretic performance limits when using low-resolution DACs and ADCs have been widely studied in the literature. 
The capacity of uniformly distributed quadrature phase shift keying (QPSK) was achieved in a quantized single-input single-output (SISO) additive white Gaussian noise (AWGN) channel \cite{singh2009limits}.
In addition, the capacity of multiple-input single-output (MISO) fading channel with one-bit ADCs was derived in a closed form \cite{mo2015capacity}.
The analytical performance of low-resolution quantization systems was further studied via  linear approximation of quantization process, namely,  Bussgang decomposition  \cite{mezghani2012capacity} and additive quantization noise model (AQNM) \cite{orhan2015low}.
In \cite{mezghani2012capacity}, the lower bound on the achievable rate of the quantized MIMO channel was derived based on the Bussgang decomposition.
In addition, in \cite{orhan2015low}, optimal bandwidth and resolution of ADC for the SISO channel were analyzed employing the AQNM.


The existing conventional precoding methods for a perfect quantization system  revealed highly limited  spectral efficiency due to the non-negligible quantization error which was not taken into account~\cite{jacobsson2017quantized,choi2021energyIOTJ}.
Accordingly, in \cite{jacobsson2017quantized},  conventional  precoding methods such as minimum mean square error (MMSE) or zero-forcing (ZF) for 3 to 4-bit DACs were proposed.
In particular, the proposed precoders with 3 to 4-bit DACs achieved comparable performance with the  high-resolution DAC system.
The alternating direction method of multipliers (ADMM) was also used to solve the inter-user interference minimization problem in massive MU-MIMO systems with low-resolution DACs \cite{wang2018finite}.
To design the precoder for  downlink MU-MIMO systems with heterogeneous-resolution DACs and ADCs where the DACs (also ADCs) have different resolutions from each other, a generalized power iteration-based algorithm was proposed to maximize the energy efficiency \cite{choi2021energyIOTJ}.
Mixed DACs and ADCs architectures which are the special cases of the heterogeneous-resolution DACs and ADCs  were further investigated, revealing the potentials in increasing the spectral efficiency compared to the case of homogeneous-resolution DACs and ADCs  where  the DACs (also ADCs) have the same resolutions to each other \cite{zhang2016mixed, pirzadeh2018spectral}.

One-bit quantization systems have also been investigated to develop state-of-the-art channel estimation and detection techniques because of the implementation practicality  and analytical tractability of one-bit quantizers \cite{choi2016near, li2017channel,  jeon2018supervised, choi2019robust}.
Channel estimation or detection techniques for a one-bit quantization system were developed by using a maximum likelihood detector \cite{choi2016near} and Bussgang decomposition \cite{li2017channel}.
In addition, learning-based detection methods without the necessity of explicit channel estimation were proposed in  \cite{jeon2018supervised, choi2019robust}.
Although the prior precoding methods for low-resolution quantization systems achieved high improvement in spectral efficiency, they are still limited to a  conventional signaling method that may not be optimal in the system with severe inter-user interference.
As such, we  also consider an advanced signaling method for low-resolution quantization systems to further increase  spectral efficiency by managing the inter-user interference while reducing the power consumption at transceivers.

RSMA introduced in \cite{joudeh2016sum} theoretically showed its optimality in achieving the channel degree-of-freedom with imperfect channel state information (CSI) by reducing interference.
RSMA can be considered as a generalization of the  Han-Kobayash scheme \cite{han:tit:81} for broadcast channels which guarantees the achievable rate within one bit per second per hertz of the  capacity in a Gaussian interference channel model  \cite{etkin:tit:08}.
The key principle of RSMA is to divide  each user's stream into a common stream and a private stream.
The codebook of the common stream is shared with all users so that they can eliminate the common stream using successive interference cancellation (SIC) when decoding their private streams.
Hence, the private streams undergo a reduced amount of interference, thereby improving the  spectral efficiency \cite{mao2018rate_bri, joudeh2016sum}.

Motivated by \cite{yang:13:tit}, the two receivers' achievable sum spectral efficiency in the MISO channel was analyzed with the randomly generated precoder for the common stream and ZF precoder for the private streams \cite{hao:tcom:15}. 
To maximize the spectral efficiency in RSMA, a linear precoding method in the downlink MISO system  based on weighted MMSE (WMMSE) \cite{chris:twc:08} was proposed in \cite{joudeh2016sum}.
In \cite{joudeh2016sum}, the spectral efficiency maximization problem was converted to a quadratically constrained quadratic program (QCQP) by minimizing the mean square error 
 (MSE).
In addition, the precoder design of RSMA was represented by approximating the optimization problem in convex form with the concave-convex procedure (CCCP) \cite{li:jsac:20}.
A hierarchical RSMA architecture where  more than one common stream is decodable depending on the hierarchy was proposed in downlink massive MIMO system \cite{dai:twc:16}.
A generalized RSMA framework was proposed in \cite{mao2018rate_bri} to bridge, generalize and outperform existing multiple access techniques such as orthogonal multiple access (OMA), non-orthogonal multiple access (NOMA), spatial-division multiple access (SDMA), and multiuser MIMO.
The optimal rate allocation and power control algorithm of the common  and  private streams was proposed in \cite{yang:icc:20}.
Furthermore, in \cite{park2021rate}, the RSMA precoder design algorithm which is based on the generalized power iteration (GPI) algorithm  was proposed to maximize the sum spectral efficiency in the downlink MU-MIMO system considering the channel estimation error.

RSMA has emerged as a promising enabler in a wide range of  applications \cite{mao2022rate}.
The performance analysis of RSMA in low-resolution quantization  systems with regularized zero-forcing (RZF) was presented in \cite{ahiadormey2021performance}.
In \cite{dizdar2021rate}, a precoder design algorithm based on the ADMM of multi-antenna joint radar-communication (JRC) system in RSMA was proposed to solve the joint sum rate maximization problem considering low-resolution DACs.
Furthermore, in \cite{dizdar2022energy}, ADMM based algorithm for energy-efficient dual-functional radar-communication in RSMA was proposed with low-resolution DACs and RF chain selection.


We remark that the prior works of RSMA successfully showed the benefits of RSMA and developed state-of-the-art RSMA transmission methods.
However, a rigorous investigation on a precoding design problem for downlink RSMA  considering the general  number of DAC and ADC bits is still missing.
Although in \cite{dizdar2021rate, dizdar2022energy} the spectral and energy efficiencies of RSMA were studied for low-resolution DACs, the derived results are   constrained to systems with  homogeneous DACs at the transmitter and perfect ADCs at the receiver.
In addition, the proposed methods in \cite{dizdar2021rate, dizdar2022energy} are optimized for the joint radar and communications systems, which may lead to sub-optimal performance when considering communication only.
More importantly, to compute the spectral efficiency in RSMA, it is necessary to consider the minimum rate of the common stream which is a non-smooth function.
In most of the literature, such a non-smooth optimization problem induced by the minimum rate condition was often solved using the convex relaxation method  based on the CVX toolbox.
The existing optimization methods  which are based on the CVX  \cite{dizdar2021rate, dizdar2022energy} have high computational complexity. 
In this regard, developing a computationally efficient precoding method with improved performance is necessary to maximize the sum spectral efficiency for downlink  RSMA  systems with heterogeneous DACs and ADCs.


\subsection{Contributions}
In this paper, we propose a precoding optimization framework and study the effect of RSMA under coarse quantization systems. 
We summarize our contributions as follows:
\begin{itemize}
    \item We consider an RSMA system in which a multi-antenna AP transmits a common stream and private streams via linear precoding to  single-antenna users.
The AP is equipped with low-resolution DACs of arbitrary resolutions, and users are also equipped with low-resolution ADCs of arbitrary resolutions. 
    This starkly contrasts with \cite{joudeh2016sum,dizdar2021rate, dizdar2022energy} where no quantization error or only DAC quantization error was considered with homogeneous DACs.
    We then  formulate the sum spectral efficiency optimization problem by incorporating the quantization errors from both the DACs and ADCs and considering the rate of the common stream.
    Since the common stream needs to be decodable by all users in RSMA, the rate of the common stream becomes the minimum rate supportable by all users' channels.

    \item We propose a novel and computationally efficient precoding method for the considered RSMA system.
    To this end, 
    we resolve the challenge that arises from the non-smoothness first by  deriving the lower bound of the sum spectral efficiency utilizing the LogSumExp approximation technique to convert the non-smooth minimum function into a tractable form.
    Since the minimum rate condition for the common stream was often handled by introducing an inequality constraint for each user's common stream \cite{joudeh2016sum,dizdar2021rate, dizdar2022energy},  such an approximation contributes to reducing algorithm complexity.
    Then, reformulating the sum spectral efficiency further, we establish the first-order optimality condition 
    to identify stationary points.
    We interpret the derived condition as a nonlinear eigenvalue problem (NEP) \cite{cai:siam:18} in which the eigenvalue and eigenvector correspond to the sum spectral efficiency and precoding vector, respectively.
    Hence, finding the leading eigenvector is equivalent to finding the best local optimal point.
    Based on the insight, we propose a quantized generalized power iteration for rate-splitting (Q-GPI-RS) algorithm by leveraging that a power iteration efficiently seeks the principal eigenvector. 
    Consequently, using the approximation of the minimum function and power iteration, the proposed Q-GPI-RS precoding method has low complexity order compared to the CVX-based methods.
  

    \item Extensive numerical results show that Q-GPI-RS outperforms the conventional linear precoding methods such as regularized ZF (RZF), ZF, and maximum ratio transmission (MRT) in terms of spectral efficiency.
   To compare the performance of  Q-GPI-RS method with the existing RSMA precoding algorithm, we also introduce the quantization-aware WMMSE-based alternating optimization (Q-WMMSE-AO) which is the extended version of the existing RSMA precoding to the considered quantization systems.
    Although Q-WMMSE-AO improves the spectral efficiency and shows higher robustness compared to the conventional WMMSE-AO approach \cite{joudeh2016sum}, Q-GPI-RS method largely outperforms Q-WMMSE-AO in most environments.
    In addition, compared to spatial division multiple access (SDMA),  RSMA offers a noticeable spectral efficiency gain in the medium to high signal-to-noise (SNR) regime where the performance is limited by the inter-user interference and quantization error. 
    Therefore, the simulation results validate not only the effectiveness of the proposed method, but also the benefit of RSMA over SDMA.
    \item Key findings from the numerical results are:
    $(i)$ As shown analytically in \cite{clerckx2019rate}, RSMA has been known to offer larger improvement of spectral efficiency with higher channel correlation,
     and we  confirm  such benefit of RSMA also  in the presence of DAC and ADC quantization errors.
    $(ii)$ Considering homogeneous DACs and ADCs, the spectral efficiency gain of RSMA increases with the resolutions due to the fact that quantization errors involved with the common stream cannot be canceled at the users.
    In addition, by reformulating the signal-to-interference-plus-noise ratios (SINRs), we estimate that the effect of ADC is more dominant than DAC in deciding the rate of the common stream, which is  verified by numerical results.
   This finding is not observed in previous works \cite{dizdar2021rate, dizdar2022energy} as  only the effect of DAC quantization error was considered.
   The finding indicates the importance of considering ADC quantization errors as well as DAC quantization errors in developing a transmission strategy.
    $(iii)$ In the heterogeneous DAC system, mixed DAC systems, in particular, antennas with low-resolution DACs tend to be turned off in the high SNR to reduce the quantization error for RSMA.
    This phenomenon may lead to overloaded systems due to the insufficient number of active antennas, and thus RSMA can play a key role in maximizing the sum spectral efficiency by leveraging the common stream; however, SDMA tends to use both the high- and low-resolution DACs to avoid the overloaded system while  suffering from the large quantization error. 
\end{itemize}

\textit{Notation}:
$\text{a}$ is a scalar, ${\bf{a}}$ is a vector and ${\bf{A}}$ is a matrix.
The superscripts $(\cdot)^{\sf T}$, $(\cdot)^{\sf H}$, and $(\cdot)^{-1}$ denote matrix transpose, Hermitian, and inversion, respectively.
$\mathbb{E}[\cdot]$ and $\text{tr}(\cdot)$ represent  expectation operation and trace of a matrix, respectively.
${\bf{I}}_K$ is the identity matrix of size $K \times K$.
${\bf 0}_{N}$ is the zero matrix of size $N \times N$ and ${\bf 0}_{N \times 1}$ is the zero vector of size $N \times 1$.
${\bf{A}} = {\rm blkdiag}\left({\bf{A}}_1, ...,{\bf{A}}_n,..., {\bf{A}}_N \right)$ is a block diagonal matrix with block diagonal entries  of ${\bf{A}}_1, ..., {\bf{A}}_N$. 

\section{System Model}
\label{sec:sys_model}

We consider a downlink MU-MIMO system, in which an access point (AP) with $N$ antennas serves $K$ single-antenna  users as shown in Fig.~\ref{fig:RS-SE-DAC-ADC system model}.
The AP employs DACs with $b_{{\sf DAC},n}$-bit resolution where $b_{{\sf DAC},n}$ represents the number of quantization bits of the DAC pair at $n$-th antenna.
Each user employs an ADC of $b_{{\sf ADC},k}$-bit resolution where $b_{{\sf ADC},k}$ represents the number of quantization bits of the ADC pair at $k$-th user.

We now explain the process of RSMA shown in Fig.~\ref{fig:RS-SE-DAC-ADC system model}.
We use the 1-layer rate-splitting (RS) architecture of RSMA \cite{clerckx2016rate,joudeh2016sum}.
Let $W_1, W_2,\dots, W_K$ be
the uncoded messages for users $1,2,\dots,K$, respectively.
Considering 1-layer RSMA, each message is
divided into two parts, i.e., $W_{k0}$ and $W_{k1}$, $k=1,\dots,K$, which correspond to the common part and private part of $W_k$ respectively. 
All common parts are combined into one super common message, i.e. $W_0 = (W_{10},\dots, W_{K0})$. 
The resulting $K+1$ messages of ($W_0, W_{11},\dots, W_{K1}$) are  encoded into symbol streams:  $K$ private streams $s_k$ and a  common 
stream $s_{\sf c}$.
The common {stream} $s_{\sf c}$ is coded from a public codebook, and thus RSMA users can decode the common {stream}.
Accordingly, the common {stream} is decoded first and  eliminated when decoding private {streams} via SIC.
We assume $s_{\sf c}, s_k \sim {\mathcal {CN} (0,1)}$, i.e.,  a complex Gaussian distribution with zero mean and unit variance.

\begin{figure}[!t]\centering
	\subfigure{\resizebox{0.9\columnwidth}{!}{\includegraphics{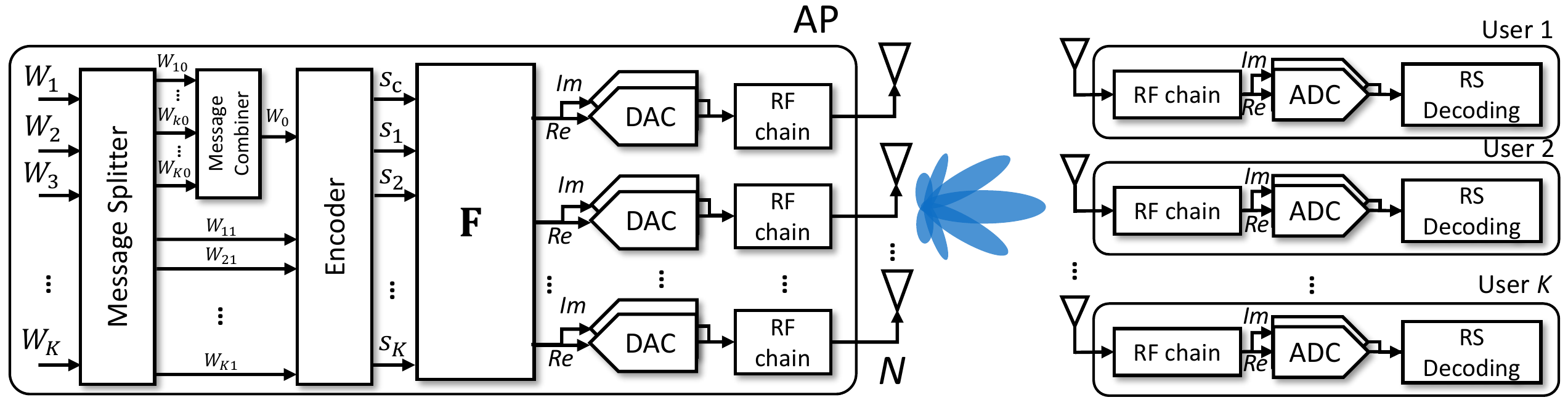}}}
	\vspace{-1em}
	\caption{A multiuser MIMO  downlink  system with rate-splitting multiple access in the low-resolution quantization regime.} 
 	\label{fig:RS-SE-DAC-ADC system model}
 	\vspace{-1em}
\end{figure}

The common stream and private streams are linearly precoded at the AP.
The digital baseband signal ${\bf{x}} \in \mathbb{C}^{N}$ is given by
\begin{align}
    {\bf{x}} = \sqrt{P}{\bf{f}}_{0} s_{\sf c} + \sqrt{P}\sum_{k = 1}^{K} {\bf{f}}_k s_k,
\end{align}
where ${\bf f}_{0} \in \mathbb{C}^{N}$ and ${\bf f}_k \in \mathbb{C}^{N}$ are precoding vectors for the common and private {streams}, respectively, and $P$ is the maximum transmit power.
Then, $\bf{x}$ is quantized at the DACs.
We adopt the AQNM method \cite{fletcher2007robust} that approximately models the quantization process in linear form.
After applying the AQNM, the quantized signal is represented as 
\begin{align} 
    {Q}({\bf{x}})  \approx {\bf x_{\sf q}} = \sqrt{P}{\bf{\Phi}}_{{\alpha }_{\sf DAC}} {\bf{f}}_{0} s_{\sf c} + \sqrt{P}{\bf{\Phi}}_{{\alpha }_{\sf DAC}}\sum_{k = 1}^{K} {\bf{f}}_k s_k + {\bf q}_{\sf DAC},
\end{align}
where $Q(\cdot)$ is a scalar quantizer which applies for each real and imaginary part, ${\bf{\Phi}}_{{\alpha }_{\sf DAC}} = {\rm diag}({\alpha }_{{\sf DAC},1},\ldots,{\alpha }_{{\sf DAC},N})\in \mathbb{C}^{N \times N}$ denotes a diagonal matrix of quantization loss, and ${\bf q }_{\sf DAC}\in \mathbb{C}^{N}$ is a DAC quantization noise vector.
The quantization loss of the \textit{n}-th DAC ${ \alpha }_{\sf DAC,\textit{n}}\in \mathbb (0,1)$ is 
determined as ${ \alpha }_{\sf DAC,\textit{n}} = 1 - { \beta }_{\sf DAC,\textit{n}}$, where ${ \beta }_{\sf DAC,\textit{n}}$ is a normalized mean squared quantization error ${ \beta }_{\sf DAC,\textit{n}} = \frac {\mathbb{E}{[| {x-Q_n(x)}|]^2}} {\mathbb{E}{[|{x}|]^2}}$ \cite{fletcher2007robust,zhang2018mixed}.
The values of ${ \beta }_{\sf DAC,\textit{n}}$ depend on the number of quantization bits $b_{{\sf DAC},n}$.
Especially, if the value of $b_{{\sf DAC},n}$ is less than 5, ${ \beta }_{\sf DAC,\textit{n}}$ is represented in Table 1 in \cite{fan2015uplink}.
If the value of $b_{{\sf DAC},n}$ is larger than 5, ${ \beta }_{\sf DAC,\textit{n}}$ can be approximated as $\frac {\pi \sqrt{3}}2 {2^{-2b_{{\sf DAC},n}}}$ \cite{fan2015uplink}.
The quantization noise is not correlated with digital baseband signal $\bf{x}$ and considered to follow ${{\bf q}_{\sf DAC}}  \backsim {\mathcal {CN} ({\bf 0}_{N\times1},\bf R_{{\bf q}_{\sf DAC}{\bf q}_{\sf DAC}})}$  which is the worst case in terms of spectral efficiency.
Let ${\bf F} = \left[{\bf f}_{0},{\bf{f}}_1, \cdots, {\bf{f}}_K\right]\in \mathbb{C}^{N\times(K+1)}$.
Then, the covariance matrix of ${{\bf q}_{\sf DAC}}$ is computed as \cite{fletcher2007robust}
\begin{align}
    \label{eq:covariance of quantization noise}
    {\bf R_{{\bf q}_{\sf DAC}{\bf q}_{\sf DAC}}}  = {\bf{\Phi}}_{{\alpha }_{\sf DAC}}{\bf{\Phi}}_{{\beta }_{\sf DAC}}{\rm diag} \left({\mathbb E}\left[\bf x\bf x^{\sf H}\right]\right).
\end{align}
Then, the received analog baseband signal vector at $K$ users is
\begin{align} \label{eq:received signal}
    {\bf y} = {\bf H}^{\sf H}{\bf x}_{\sf q} + {\bf n},
\end{align}
where ${\bf H}^{\sf H} \in \mathbb {C}^{K \times N}$ is a downlink channel matrix between the AP and $K$ users, and ${{\bf n}} \backsim {\mathcal {CN} ({\bf 0}_{K\times1},\sigma^{2}{\bf I}_{K})}$ is an AWGN with zero mean and variance of ${\sigma^2}$.
We assume that both the AP and users have the perfect knowledge of channel state information (CSI).\footnote{
Although ADC and DAC quantization on the channel sounding (along with other impairments) would lead to imperfect CSI knowledge, we shall leave the impact of such impairments for future studies.}
At each user,  the received analog signal $y_k$ is quantized by the ADCs of $b_{{\sf ADC},k}$ bits.
Then, the received digital baseband signals are represented as \cite{fletcher2007robust,choi2018spatial}
\begin{align}
    {Q}({\bf{y}})&\approx {\bf y_{\sf q}} = {\bf{\Phi}}_{{\alpha}_{\sf ADC}}{\bf y} + {\bf q_{\sf ADC}}
    \\ 
    \nonumber
    &= \sqrt{P}{\bf{\Phi}}_{{\alpha }_{\sf ADC}}{\bf H}^{\sf H}{\bf{\Phi}}_{{\alpha }_{\sf DAC}}{\bf{f}}_{0} s_{\sf c} + \sqrt{P}{\bf{\Phi}}_{{\alpha }_{\sf ADC}}{\bf H}^{\sf H}{\bf{\Phi}}_{{\alpha }_{\sf DAC}}\sum_{k = 1}^{K} {\bf{f}}_k s_k+ {\bf{\Phi}}_{{\alpha }_{\sf ADC}}{\bf H}^{\sf H}{\bf q }_{\sf DAC}+{\bf{\Phi}}_{{\alpha}_{\sf ADC}}{\bf n}+{\bf q_{\sf ADC}},
\end{align}
where ${\bf{\Phi}}_{{\alpha}_{\sf ADC}} = {\rm diag}({\alpha }_{{\sf ADC},{\sf 1}},\ldots,{\alpha }_{{\sf ADC},{K}})\in \mathbb{C}^{K\times K}$ denotes a diagonal matrix of quantization loss, and ${\bf q }_{\sf ADC}\in \mathbb{C}^{K}$ is a ADC quantization noise vector.
The quantization loss of the $k$-th ADC ${ \alpha }_{{\sf ADC},k}$ and ${ \beta }_{{\sf ADC},k}$ is defined as likewise quantization loss of the DAC.
The ADC quantization noise ${\bf{q}}_{\sf ADC}$ is not correlated with analog baseband signal $\bf{y}$ and considered to follow ${{\bf q}_{\sf ADC}}  \backsim {\mathcal {CN} ({\bf 0}_{K\times1},\bf R_{{\bf q}_{\sf ADC}{\bf q}_{\sf ADC}})}$ which is the worst case in terms of spectral efficiency.
The covariance is derived as
\begin{align}
    \label{eq:RqADC}
    \bf R_{{\bf q}_{\sf ADC}{\bf q}_{\sf ADC}} = {\bf{\Phi}}_{{\alpha }_{\sf ADC}}{\bf{\Phi}}_{{\beta }_{\sf ADC}}{\rm diag} \left({\mathbb E}\left[\bf y\bf y^{\sf H}\right]\right).
\end{align}
Accordingly, the digital baseband signal at user \textit{k} is represented as
\begin{align}
    y_{{\sf q},k}
    =& \sqrt{P}{\alpha }_{{\sf ADC},k}{\bf h}_k^{\sf H}{\bf{\Phi}}_{{\alpha }_{\sf DAC}}{\bf{f}}_{0} s_{\sf c} + \sqrt{P}{\alpha }_{{\sf ADC},k}{\bf h}_k^{\sf H}{\bf{\Phi}}_{{\alpha }_{\sf DAC}}{\bf{f}}_{k} s_{k} 
    \nonumber
    \\
    \label{eq:yqk}
    &+ \sqrt{P}{\alpha }_{{\sf ADC},k}\sum_{i =1, i \neq k}^{K} {\bf h}_k^{\sf H}{\bf{\Phi}}_{{\alpha }_{\sf DAC}}{\bf{f}}_i s_i+{\alpha }_{{\sf ADC},k}{\bf h}_k^{\sf H}{\bf q }_{\sf DAC}+{\alpha }_{{\sf ADC},k}{n}_k\!+\!{q}_{{\sf ADC},k},
\end{align}
where ${\bf h}_k$ is the $k$-th column of the channel matrix $\bf H$.
We remark that  both the ADC and DAC quantization error terms, ${\alpha }_{{\sf ADC},k}{\bf h}_k^{\sf H}{\bf q }_{\sf DAC}$ and  $q_{{\sf ADC},k}$,  in \eqref{eq:yqk} involve the common stream as well as the private streams and corresponding precoders.
Hence, it is expected that without properly designing the precoder for the common stream it is more difficult to accomplish the potential of RSMA in the considered low-resolution quantization systems than the systems with perfect quantization or low-resolution DACs. 

\subsection{Performance Metrics and Problem Formulation}
Recall the decoding principle of RSMA \cite{joudeh2016sum}; each user  decodes the common {stream} $s_{\sf c}$ by treating the private streams as noise.
After $s_{\sf c}$ is successfully decoded, each user can cancel the common {stream} from the received signal $\bf{y}$ by using SIC.
In this regard, the combination of message split and SIC enables to partially decode interference and treat the remaining interference as noise and consequently increases the sum spectral efficiency of users \cite{mao2018rate_bri}.
%
To successfully perform SIC, the common stream needs to be designed carefully in a way that it is decodable to all RMSA users.
Accordingly, the rate of the common stream $s_{\sf c}$ is determined by the minimum of the spectral efficiencies of all users.
Then, the spectral efficiency of $s_{\sf c}$ is defined as
\begin{align}
    \label{eq:ergodic common message}
    R_{\sf c} = \min_{k \in \CMcal{K}} \left\{  \log_2 \left(1 + \frac{P {\alpha^2_{{\sf ADC},k}}|{\bf h}_k^{\sf H}{\bf{\Phi}}_{{\alpha }_{\sf DAC}}{\bf{f}}_{0}|^2} {\text{IUI}_{\text c}+\text{QE}_{k}+{\alpha}_{{\sf ADC},k}^2{\sigma}^2} \right)\right\}
    =\min_{k \in \CMcal{K}} \left\{R_{{\sf c},k}\right\},
\end{align}
where 
\begin{align}
    \label{eq:QEk}
    \text{IUI}_{\text c} = P {\alpha^2_ {{\sf ADC},k}}{\sum_{k = 1}^{K} |{\bf{h}}_{k}^{\sf H}{\bf{\Phi}}_{{\alpha }_{\sf DAC}} {\bf{f}}_{k}|^2} \text{ and }
    \text{QE}_{k} = {\alpha }_{{\sf ADC},k}^2{\bf h}_k^{\sf H}{\bf R}_{{\bf q}_{\sf DAC}{\bf q}_{\sf DAC}}{\bf h}_k + r_{{\sf q}_{{\sf ADC},k}{\sf q}_{{\sf ADC},k}}.
\end{align}
After decoding and eliminating the common stream using SIC, the achievable spectral efficiency of the private stream $s_k$ of  user $k$ is formulated as
\begin{align}
    \label{eq:ergodic private message}
    R_{k} &= \ \log_2 \left(1 + \frac{P {\alpha^2_{{\sf ADC},k}}|{\bf h}_k^{\sf H}{\bf{\Phi}}_{{\alpha }_{\sf DAC}}{\bf{f}}_{ k}|^2} {\text{IUI}_{k}+\text{QE}_{k}+{\alpha }_{{\sf ADC},k}^2{\sigma}^2} \right)\,.
\end{align}
where $\text{IUI}_{k} = P {\alpha^2_{{\sf ADC}, k}}{\sum_{i = 1,i \neq k}^{K} |{\bf{h}}_{k}^{\sf H}{\bf{\Phi}}_{{\alpha }_{\sf DAC}} {\bf{f}}_{ i}|^2}.$
\begin{remark}[SIC and error propagation]\normalfont
We note that when performing SIC users can only remove the quantized common stream, i.e., $ \sqrt{P}{\alpha }_{{\sf ADC},k}{\bf h}_k^{\sf H}{\bf{\Phi}}_{{\alpha }_{\sf DAC}}{\bf{f}}_{0} s_{\sf c}$,  not the quantization errors that come from the common stream.
Under the AQNM, we consider that the additive quantization error follows a Gaussian distribution which is the worst case in terms of the spectral efficiency. 
Accordingly, the derived common rate is considered to be conservative. 
In this regard, we assume perfect decoding of the common stream if the derived rate of the common stream is less than or equal to the minimum of the common rates and thus, no SIC error propagation occurs. 
\end{remark}

We note that the user signals are quantized at the DACs and ADCs, and the quantization noise error terms appear in the rates of the common and private streams.
To maximize the sum spectral efficiency, an optimization problem is formulated as
\begin{align}
    \label{eq:problem_main}
    \mathop{{\text{maximize}}}_{{\bf{f}}_{0}, {\bf{f}}_1, \cdots,{\bf{f}}_K}& \;\;  R_{{\sf c}}+{\sum_{k=1}^{K}}R_k =  R_{\sum}\\
    \label{eq:problem_main_constraint}
    \mathop{{\text{subject to}}}& \;\;{\text {tr}}\left( \mathbb{E}\left[{\bf x}_{\sf q}{\bf x}_{\sf q}^{\sf H}\right]\right) \leq P,
\end{align}
where \eqref{eq:problem_main_constraint} is the  transmit power constraint.
In the following section, we propose a novel computationally efficient precoding method to solve \eqref{eq:problem_main}. 

\section{Precoder Optimization Using Generalized Power Iteration} \label{sec:main}
Since the direct solution of the problem in \eqref{eq:problem_main} is not available due to the non-convexity and non-smoothness of the problem, we present key techniques to convert the formulated problem to a tractable form and solve the reformulated problem by identifying a stationary point  with maximum spectral efficiency.
\subsection{Problem Reformulation}
We first simplify the  constraint in \eqref{eq:problem_main_constraint}.
The covariance matrix of ${\bf q}_{\sf DAC}$ in \eqref{eq:covariance of quantization noise} is derived as
\begin{align}
    {\bf R_{{\bf q}_{\sf DAC}{\bf q}_{\sf DAC}}}
     = &  \;{\bf{\Phi}}_{{\alpha }_{\sf DAC}}{\bf{\Phi}}_{{\beta }_{\sf DAC}} {\rm diag}\left(P{\bf f}_{0}{\bf f}_{0}^{\sf H}+P\sum_{i = 1}^{K} {\bf{f}}_i {\bf{f}}_i^{\sf H}\right) \\
    \label{eq:re covariance of quantization noise_2}
     = &\;  {\bf{\Phi}}_{{\alpha }_{\sf DAC}}{\bf{\Phi}}_{{\beta }_{\sf DAC}} {\rm diag}\left(P{\bf F}{\bf F}^{\sf H}\right).  
\end{align}
Then, the power constraint in \eqref{eq:problem_main_constraint} is reformulated as
\begin{align}
    \text{{tr}}\left( \mathbb{E}\left[{\bf x}_{\sf q}{\bf x}_{\sf q}^{\sf H}\right]\right) &={\text {tr}}\left( P{\bf \Phi}_{\alpha_{\sf DAC}}\sum_{i = 0}^{K} {\bf{f}}_i{\bf{f}}_i^{\sf H}{\bf \Phi}_{\alpha_{\sf DAC}}^{\sf H}+{\bf R_{{\bf q}_{\sf DAC}{\bf q}_{\sf DAC}}}\right) 
    \\
    &\mathop{=}^{(a)}{\text {tr}}\left( P{\bf \Phi}_{\alpha_{\sf DAC}} {\bf F}{\bf F}^{\sf H}\right)\\
    & \leq P,
\end{align}
where ${(a)}$ comes from \eqref{eq:re covariance of quantization noise_2} and ${\bf \Phi}_{\beta_{\sf DAC}} = {\bf I}_N - {\bf \Phi}_{\alpha_{\sf DAC}}$.
Finally, the constraint in \eqref{eq:problem_main_constraint} reduces to
\begin{align}
    {\text {tr}}\left({\bf \Phi}_{\alpha_{\sf DAC}} {\bf F}{\bf F}^{\sf H}\right) {\leq} 1.
\end{align}

Now, to address the challenges of the non-smoothness of the minimum operation and non-convexity of the spectral efficiency, we  approximate the minimum rate of $R_{{\sf c},k}$ and further reformulate the problem into a tractable form. 
Let us define a positive constant $\tau > 0$.
We utilize the LogSumExp technique to approximate the minimum function as  \cite{shen2010dual} 
\begin{align}
    \label{eq:logsumexp}
    \min_{i = 1,...,N}\{x_i\}  \approx - {\tau} \ln\left(\sum_{i = 1}^{N} \exp\left( -\frac{1}{\tau}{x_i}  \right)\right),
\end{align}
where the approximation becomes tight as ${\tau} \rightarrow 0$.
Applying \eqref{eq:logsumexp} to the rate of the common stream in \eqref{eq:ergodic common message}, we have
\begin{align}
    \label{eq:approximation_R}
    \min_{k \in \CMcal{K }} \{R_{{\sf c},k}\} \approx  -{\tau} \ln \left( \sum_{k = 1}^{K} \exp\left( -{\frac{1}{\tau} R_{{\sf c},k}}\right) \right).
\end{align}
We remark that once we obtain precoders using the approximation, we re-compute the achievable spectral efficiency   by using the minimum operation without approximation to determine the actual common rate.
Although the approximation in \eqref{eq:approximation_R} converts the objective function in \eqref{eq:problem_main} to a smooth function, the non-convexity still resides in the problem which cannot be avoided, and the quantization errors which are the functions of the precoding vectors still make the problem more complicated to solve.
To resolve the difficulty, we now re-express $\text{QE}_{k}$ in \eqref{eq:QEk} by reorganizing the DAC quantization error covariance-related term as
\begin{align}
    {\bf h}_k^{\sf H}{\bf R}_{{\bf q}_{\sf DAC}{\bf q}_{\sf DAC}}{\bf h}_k
    &= {\bf h}_k^{\sf H}{\bf{\Phi}}_{{\alpha }_{\sf DAC}}{\bf{\Phi}}_{{\beta }_{\sf DAC}} {\rm diag}\left(P\sum_{i = 0}^{K} {\bf{f}}_i {\bf{f}}_i^{\sf H}\right){\bf h}_k \\
    \label{eq:rewrite the DAC quantizationne_1}
    &= P\sum_{i=0}^{K}{\bf f}_i^{\sf H}{\bf{\Phi}}_{{\alpha }_{\sf DAC}}{\bf{\Phi}}_{{\beta }_{\sf DAC}}{\rm diag}\left({\bf h}_k{\bf h}_k^{\sf H}\right){\bf f}_i,
\end{align}
and the ADC quantization error covariance-related term as
\begin{align}
    \frac{r_{{\sf q_{{\sf ADC},k}}{\sf q_{{\sf ADC},k}}}}{{\alpha }_{{\sf ADC},k}{\beta }_{{\sf ADC},k}} &= {{\bf h}_k}^{\sf H} \mathbb{E}{[{\bf x}_{\sf q} {\bf x}_{\sf q}^{\sf H}]} {{\bf h}_k} + \sigma^2 \\
    &= {{\bf h}_k}^{\sf H} \left( P{\bf \Phi}_{\alpha_{\sf DAC}}\sum_{i = 0}^{K} {\bf{f}}_i{\bf{f}}_i^{\sf H}{\bf \Phi}_{\alpha_{\sf DAC}}^{\sf H}+{\bf R_{{\bf q}_{\sf DAC}{\bf q}_{\sf DAC}}}\right) {{\bf h}_k}+ \sigma^2 \\
    \label{eq:covariance of quantization noise at user k_ inserting rqkqk_2}
    &\stackrel{(a)}= P\sum_{i=0}^{K}{\bf f}_i^{\sf H}\left({\bf{\Phi}}_{{\alpha }_{\sf DAC}}^{\sf H}{\bf h}_k{\bf h}_k^{\sf H}{\bf{\Phi}}_{{\alpha }_{\sf DAC}}+{\bf{\Phi}}_{{\alpha }_{\sf DAC}}{\bf{\Phi}}_{{\beta }_{\sf DAC}}{\rm diag}\left({\bf h}_k{\bf h}_k^{\sf H}\right)\right){\bf f}_i+\sigma^2.
\end{align}
where $r_{{\sf q_{{\sf ADC},k}}{\sf q_{{\sf ADC},k}}}$ represents the $k$th diagonal entry of \eqref{eq:RqADC} and $(a)$ follows from  \eqref{eq:rewrite the DAC quantizationne_1}.
Using \eqref{eq:rewrite the DAC quantizationne_1} and \eqref{eq:covariance of quantization noise at user k_ inserting rqkqk_2}, the SINRs of the common stream of user $k$ is reorganized as 
\begin{align}
    \label{eq:sinr_c_reorg}
    \gamma_{{\sf c},k} = \frac{{\alpha _ {{\sf ADC},k}}|{\bf h}_k^{\sf H}{\bf{\Phi}}_{{\alpha }_{\sf DAC}}{\bf{f}}_{0}|^2} {{\sum_{i = 0}^{K} |{\bf h}_{k}^{\sf H}{\bf{\Phi}}_{{\alpha }_{\sf DAC}} {\bf{f}}_{i}|^2 - {\alpha }_{{\sf ADC},k}|{\bf h}_{k}^{\sf H}{\bf{\Phi}}_{{\alpha }_{\sf DAC}} {\bf{f}}_{0}|^2+\sum_{i=0}^{K}{\bf f}_i^{\sf H}{\bf{\Phi}}_{{\alpha }_{\sf DAC}}{\bf{\Phi}}_{{\beta }_{\sf DAC}}{\rm diag}\left({{\bf h}}_k{{\bf h}}_k^{\sf H}\right){\bf f}_i+\frac{{\sigma}^2}{P}}}.
\end{align}
Similarly, the SINR of the private stream of user $k$ becomes 
\begin{align}
    \label{eq:sinr_p_reorg}
   & \gamma_k = 
   \\
   \nonumber
    &\frac{{\alpha _ {{\sf ADC},k}}|{\bf h}_k^{\sf H}{\bf{\Phi}}_{{\alpha }_{\sf DAC}}{\bf{f}}_{k}|^2} {{\sum_{i = 0}^{K} |{\bf h}_{k}^{\sf H}{\bf{\Phi}}_{{\alpha }_{\sf DAC}} {\bf{f}}_{i}|^2 \!-\!\alpha_{{\sf ADC},k}(|{\bf h}_{k}^{\sf H}{\bf{\Phi}}_{{\alpha }_{\sf DAC}} {\bf{f}}_{k}|^2\!+\!|{\bf h}_{k}^{\sf H}{\bf{\Phi}}_{{\alpha }_{\sf DAC}} {\bf{f}}_{0}|^2) \!+\!\sum_{i=0}^{K}{\bf f}_i^{\sf H}{\bf{\Phi}}_{{\alpha }_{\sf DAC}}{\bf{\Phi}}_{{\beta }_{\sf DAC}}\!{\rm diag}\left({{\bf h}}_k{{\bf h}}_k^{\sf H}\right){\bf f}_i\!+\!\frac{{\sigma}^2}{P}}}.
\end{align}
\begin{remark}
  [Effect of ADC and DAC quantization errors to RSMA]
  \label{rm:effect}
  \normalfont 
   To compare the effect of quantization errors from ADCs and DACs  specifically to the use of the common stream, let us first assume that the quantization error comes only from ADCs. 
    Then \eqref{eq:sinr_c_reorg} becomes
    \begin{align}
        \label{eq:SINR_ADC}
        \gamma_{{\sf c},k}^{\sf ADC} = \frac{{\alpha _ {{\sf ADC},k}}|{\bf h}_k^{\sf H}{\bf{f}}_{0}|^2} {{\sum_{i = 1}^{K} |{\bf h}_{k}^{\sf H} {\bf{f}}_{i}|^2} + \underbrace{(1- {\alpha }_{{\sf ADC},k})|{\bf h}_{k}^{\sf H} {\bf{f}}_{0}|^2}_{\text{quantization error from  }s_{\sf c}}+\frac{{\sigma}^2}{P}},
    \end{align}
    Now, let us assume that the quantization error occurs only from DACs and they are homogeneous.
    Then \eqref{eq:sinr_c_reorg} becomes
    \begin{align}
        \gamma_{{\sf c},k}^{\sf DAC}&=\frac{\alpha_{{\sf DAC}}|{\bf h}_k^{\sf H}{\bf{f}}_{0}|^2} {{\alpha }_{\sf DAC}{\sum_{i = 1}^{K} |{\bf h}_{k}^{\sf H} {\bf{f}}_{i}|^2 +(1-\alpha_{\sf DAC})\sum_{i=0}^{K}{\bf f}_i^{\sf H}{\rm diag}\left({{\bf h}}_k{{\bf h}}_k^{\sf H}\right){\bf f}_i+\frac{{\sigma}^2}{P}}}
        \\
        \label{eq:SINR_DAC}
        &\approx\frac{\alpha_{{\sf DAC}}|{\bf h}_k^{\sf H}{\bf{f}}_{0}|^2} {{\alpha }_{\sf DAC}{\sum_{i = 1}^{K} |{\bf h}_{k}^{\sf H} {\bf{f}}_{i}|^2 +(1-\alpha_{\sf DAC})\sum_{i=1}^{K}{\bf f}_i^{\sf H}{\rm diag}\left({{\bf h}}_k{{\bf h}}_k^{\sf H}\right){\bf f}_i+  \underbrace{e(1-\alpha_{\sf DAC})|{\bf f}_0^{\sf H}{{\bf h}}_k|^2}_{\text{quantization error from  }s_{\sf c}} +\frac{{\sigma}^2}{P}}}
    \end{align}
    where $0<e < 1$ is a positive constant value less than one  by assuming $|{\bf f}_0^{\sf H}{{\bf h}}_k|^2 > {\bf f}_0^{\sf H}{\rm diag}({{\bf h}}_k{{\bf h}}_k^{\sf H}){\bf f}_0$;  since the precoding vector for the common stream $\bff_{0}$ should be constructively combined for the channels of the RSMA users, taking only the diagonal terms of $(\bh_k\bh_k^{\sf H})$ deteriorates the design goal and reduces beamforming gains.
    Then, when comparing \eqref{eq:SINR_ADC} and \eqref{eq:SINR_DAC}, the quantization error from $s_{\sf c}$ tends to be worse in \eqref{eq:SINR_ADC} with the same resolution, thereby preventing the common stream more from using higher transmit power.
   Since there is no difference between the common and private SINRs in terms of the quantization errors, the same occurs for the SINRs of the private streams,
  i.e., the SINR of the private stream  has the quantization error term associated with the common stream under ADC and DAC quantizations as
  $(1- {\alpha }_{{\sf ADC},k})|{\bf h}_{k}^{\sf H} {\bf{f}}_{0}|^2$ and $e(1- {\alpha }_{{\sf DAC}})|{\bf h}_{k}^{\sf H} {\bf{f}}_{0}|^2$, respectively. 
   This indicates that allocating more power to the common stream gives more penalty to the private streams when using low-resolution ADCs than low-resolution DACs.
    In other words, the DAC quantizes the signals before the common precoder is constructively combined with the user channels, whereas the ADC quantizes the signals after the precoder is constructively combined with the channels, thereby providing larger quantization error.
   Therefore, it is concluded that using low-resolution ADCs tends to  limit the use of the common stream  than using low-resolution DACs for the same resolution.
 This will be confirmed  in Section~\ref{subsec:DACADCeffect}. 
\end{remark}
\begin{remark}  [Effect of ADC and DAC quantization errors to SDMA]\normalfont
The behavior of the quantization error to the common stream  in Remark~\ref{rm:effect} is due to the multicast nature of the common stream; $\bff_0$ needs to be constructively combined with all RSMA user channels, and thus, the significance of ADC and DAC quantization error is not the same to the use of private streams as analyzed as follows:
 if we consider SDMA, we have the SINRs of user $k$ as
  \begin{align}
        \label{eq:SINR_sdma_ADC}
        \gamma_{k}^{\sf ADC} = \frac{{\alpha _ {{\sf ADC},k}}|{\bf h}_k^{\sf H}{\bf{f}}_{0}|^2} {{\sum_{i = 1}^{K} |{\bf h}_{k}^{\sf H} {\bf{f}}_{i}|^2} -\alpha_{{\sf ADC},k} |\bh_k^{\sf H}\bff_k|^2+\frac{{\sigma}^2}{P}}
    \end{align}
    and
     \begin{align}
        \label{eq:SINR_sdma_DAC}
        \gamma_{k}^{\sf DAC}=\frac{\alpha_{{\sf DAC}}|{\bf h}_k^{\sf H}{\bf{f}}_{0}|^2} {{\sum_{i=1}^{K}\left\{{\alpha }_{\sf DAC} |{\bf h}_{k}^{\sf H} {\bf{f}}_{i}|^2 +(1\!-\!\alpha_{\sf DAC}){\bf f}_i^{\sf H}{\rm diag}\left({{\bf h}}_k{{\bf h}}_k^{\sf H}\right){\bf f}_i\right\} -\alpha_{{\sf DAC}} |\bh_k^{\sf H}\bff_k|^2 +\frac{{\sigma}^2}{P}}}.
    \end{align}
We note that when $|{\bf f}_i^{\sf H}{{\bf h}}_k|^2 = {\bf f}_i^{\sf H}{\rm diag}({{\bf h}}_k{{\bf h}}_k^{\sf H}){\bf f}_i$, $\forall i$, \eqref{eq:SINR_sdma_ADC} and \eqref{eq:SINR_sdma_DAC} are equal for the same ADC and DAC resolutions.
For $i = k$, we tend to have $|{\bf f}_i^{\sf H}{{\bf h}}_k|^2 > {\bf f}_i^{\sf H}{\rm diag}({{\bf h}}_k{{\bf h}}_k^{\sf H}){\bf f}_i$ because $\bff_k$ should be designed to be constructively combined with $\bh_k$.
For $i\neq k$, however, we can assume $|{\bf f}_i^{\sf H}{{\bf h}}_k|^2 < {\bf f}_i^{\sf H}{\rm diag}({{\bf h}}_k{{\bf h}}_k^{\sf H}){\bf f}_i$ since ${\bf f}_i$ is normally designed to nullify the interfering channel $\bh_k$, but diagonalization of $\bh_k\bh_k^{\sf H}$ may break the goal.
Hence, the DAC quantization error from the private streams may increase or decrease the SINR compared to the SINR under ADC quantization error.
 Therefore, no such claim in Remark~\ref{rm:effect} can be made for the use of private streams.
 \end{remark}

Now, we define the weighted precoding vector of user $k$ as
\begin{align}
    {\bf w}_k = {\bf {\Phi}}_{ \alpha_{\sf DAC}}^{1/2}{\bf f}_k.
\end{align}
Let $\bW = [\bw_0, \bw_1, \dots, \bw_K]$.
We vectorize the weighted precoding matrix $\bW$  as ${\bar{\bf w}} = {{\rm vec}}\left(\bf W\right)$.
Here, we assume tr$\left({\bf W}{\bf W}^{\sf H} \right) = 1$ which indicates that the AP uses the maximum transmit power $P$, which is optimal in terms of maximizing the spectral efficiency.
Let ${\bf G}_k = ({\bf{\Phi}}_{{\alpha }_{\sf DAC}}^{1/2})^{\sf H}{\bf h}_k{\bf h}_k^{\sf H}{\bf{\Phi}}_{{\alpha }_{\sf DAC}}^{1/2}+{\bf{\Phi}}_{{\beta }_{\sf DAC}}{\rm diag}\left({\bf h}_k {\bf h}_k^{\sf H}\right)$. 
Then using the SINRs in \eqref{eq:sinr_c_reorg} and \eqref{eq:sinr_p_reorg} with the vectorized weighted precoder $\bar \bw$, we represent  $R_{{\sf c},k}$ in a Rayleigh quotient form as
\begin{align}
    \label{eq:rewrite_block}
    R_{{\sf c},k} = \log_2 \left( \frac{\bar {\bf{w}}^{\sf H} {\bf{A}}_{{\sf c},k} \bar{\bf{w}}}{\bar{\bf{w}}^{\sf H} {\bf{B}}_{{\sf c},k} \bar{\bf{w}} } \right),
\end{align}
where
\begin{align}
    &{\bf{A}}_{{\sf c},k} = {\rm blkdiag} \left({\bf G}_k, \cdots , {\bf G}_k \right) + {\bf{I}}_{N(K+1)} \frac{ \sigma^2}{P}, \\
    &{\bf{B}}_{{\sf c},k}\! =\! {\bf{A}}_{{\sf c},k}\! -\! {\rm blkdiag} \left(\!\alpha_{{\sf ADC},k}({{\bf{\Phi}}_{{\alpha }_{\sf DAC}}^{1/2}})^{\sf H}{\bf{h}}_k {\bf{h}}_k^{\sf H}{\bf{\Phi}}_{{\alpha }_{\sf DAC}}^{1/2},{{\bf{0}}_{N}},\!\cdots\!, {{\bf{0}}_N}\!\right).
\end{align}
Here, ${\bf{A}}_{{\sf c},k}$ and ${\bf{B}}_{{\sf c},k}$ are the diagonal matrices of size $N(K+1)\times N(K+1)$.
Similarly, we cast  $R_{k}$ into the Rayleigh quotient form as
\begin{align}
    \label{eq:rewrite_block_pri}
    R_{k} = \log_2 \left( \frac{\bar {\bf{w}}^{\sf H} {\bf{A}}_{k}\bar{\bf{w}}}{\bar{\bf{w}}^{\sf H} {\bf{B}}_{k} \bar{\bf{w}} } \right),
\end{align}
where
\begin{align}
    \nonumber
    {\bf{A}}_{k}\! =& {\rm blkdiag} \left({\bf G}_k\!-\!\alpha_{{\sf ADC},k}(\!{{\bf{\Phi}}_{{\alpha}_{\sf DAC}}^{1/2}})^{\sf H} {\bf{h}}_k {\bf{h}}_k^{\sf H}{{\bf{\Phi}}_{{\alpha }_{\sf DAC}}^{1/2}},{\bf G}_k,\! \cdots \!, {\bf G}_k\right)\!+ {\bf{I}}_{N(K+1)} \frac{ \sigma^2}{P}, \\
    {\bf{B}}_{k}\! =& {\bf{A}}_{k}\!\! -\! {\rm blkdiag} \left(\!{{\bf{0}}_N}\!,\! \cdots\! ,    \underbrace{\alpha_{{\sf ADC},k}(\!{{\bf{\Phi}}_{{\alpha }_{\sf DAC}}^{1/2}}\!)^{\sf H} {\bf{h}}_k {\bf{h}}_k^{\sf H}{{\bf{\Phi}}_{{\alpha }_{\sf DAC}}^{1/2}}}_{{\text{the} \;(k+1){\text{th block}}}},\!{{\bf{0}}_N},\! \cdots\! ,\! {{\bf{0}}_N}\!\! \right).
\end{align}
Finally, based on \eqref{eq:approximation_R}, \eqref{eq:rewrite_block}, and \eqref{eq:rewrite_block_pri}, the optimization problem in \eqref{eq:problem_main} is reformulated as
\begin{align}
    \label{eq:problem_new_Blk_q}
    \mathop{{\text{maximize}}}_{\bar {\bf{w}}}& \;\;
    \ln \left(\sum_{k = 1}^{K} \left( \frac{\bar {\bf{w}}^{\sf H} {\bf{A}}_{{\sf c},k} \bar {\bf{w}}}{\bar {\bf{w}}^{\sf H} {\bf{B}}_{{\sf c},k}  \bar {\bf{w}} }  \right)^{-\frac{1}{\tau\ln2}} \right)^{-\tau}\! + \!
    \frac{1}{\ln2}\sum_{k = 1}^{K}\ln \left( \frac{\bar {\bf{w}}^{\sf H} {\bf{A}}_k \bar {\bf{w}}}{\bar {\bf{w}}^{\sf H} {\bf{B}}_k \bar {\bf{w}}}\right) \\
    \label{eq:constraint_new_new_Blk}
    {\text{subject to}} & \;\; \|{\bar{{\bf w}}}\| = 1.
\end{align}
Recall that the equality constraint in \eqref{eq:constraint_new_new_Blk} comes from using the maximum transmit power. 
We remark that the problem in \eqref{eq:problem_new_Blk_q} is invariant to $\bar \bw$ upto its scaling. 
Accordingly, we can effectively ignore the constraint in \eqref{eq:constraint_new_new_Blk}.

\subsection{First-Order Optimality Condition}
In this subsection, we derive the first-order optimality condition of \eqref{eq:problem_new_Blk_q} with respect to $\bar {\bf{w}}$.
\begin{lemma}
    \label{lem:main}
    The first-order optimality condition of the optimization problem \eqref{eq:problem_new_Blk_q} is satisfied if the following holds: 
\begin{align}
    \label{eq:lem_kkt_stack}
    {\bf{B}}_{\sf KKT}^{-1} (\bar {\bf{w}}){\bf{A}}_{\sf KKT}(\bar {\bf{w}}) \bar {\bf{w}} = \lambda(\bar {\bf{w}}) \bar {\bf{w}},
\end{align} 
where 
\begin{align}
    \label{eq:lem_A_kkt_Q}
    &{\bf{A}}_{\sf KKT}(\bar {\bf{w}}) =  \lambda_{\sf num}(\bar {\bf{w}}) \cdot \sum_{k = 1}^{K}  \left[ \frac{\exp\left( -\frac{1}{\tau}  \log_2\left(\frac{\bar {\bf{w}}^{\sf H} {\bf{A}}_{{\sf c},k} \bar {\bf{w}}}{\bar {\bf{w}}^{\sf H} {\bf{B}}_{{\sf c},k} \bar {\bf{w}} } \right) \right)}{\sum_{\ell =1}^{K} \exp\left(-\frac{1}{\tau} \log_2\left(\frac{\bar {\bf{w}}^{\sf H} {\bf{A}}_{{\sf c},\ell} \bar {\bf{w}}}{\bar {\bf{w}}^{\sf H} {\bf{B}}_{{\sf c},\ell} \bar {\bf{w}} } \right)\right)} \frac{{\bf{A}}_{{\sf c},k}}{\bar {\bf{w}}^{\sf H} {\bf{A}}_{{\sf c},k} \bar {\bf{w}}} + \frac{{\bf{A}}_k}{\bar {\bf{w}}^{\sf H} {\bf{A}}_k \bar {\bf{w}}} \right]
    \\
    \label{eq:lem_B_kkt_Q}
    &{\bf{B}}_{\sf KKT}(\bar {\bf{w}}) = \lambda_{\sf den} (\bar {\bf{w}}) \cdot \sum_{k = 1}^{K}  \left[ \frac{\exp\left( -\frac{1}{\tau}  \log_2\left(\frac{\bar {\bf{w}}^{\sf H} {\bf{A}}_{{\sf c},k} \bar {\bf{w}}}{\bar {\bf{w}}^{\sf H} {\bf{B}}_{{\sf c},k} \bar {\bf{w}} } \right) \right)}{\sum_{\ell = 1}^{K} \exp\left(-\frac{1}{\tau} \log_2\left(\frac{\bar {\bf{w}}^{\sf H} {\bf{A}}_{{\sf c},\ell} \bar {\bf{w}}}{\bar {\bf{w}}^{\sf H} {\bf{B}}_{{\sf c},\ell} \bar {\bf{w}} } \right)\right)} \frac{{\bf{B}}_{{\sf c},k}}{\bar {\bf{w}}^{\sf H} {\bf{B}}_{{\sf c},k} \bar {\bf{w}}} + \frac{{\bf{B}}_k}{\bar {\bf{w}}^{\sf H} {\bf{B}}_k \bar {\bf{w}}} \right], 
\end{align}
with
\begin{align}
    \label{eq:lem_lambda}
    \lambda(\bar {\bf{w}}) &= \left\{\frac{1}{K\ln2 }\sum_{k = 1}^{K} \left(\frac{\bar {\bf{w}}^{\sf H} {\bf{A}}_{{\sf c},k} \bar {\bf{w}}}{\bar {\bf{w}}^{\sf H} {\bf{B}}_{{\sf c},k} {\bar{\bf{w}}}} \right)^{ -\frac{1}{\tau}} \right\}^{-\frac{\tau}{\ln 2}} \ \prod_{k = 1}^{K} \left(\frac{\bar {\bf{w}}^{\sf H} {\bf{A}}_k \bar {\bf{w}}}{\bar {\bf{w}}^{\sf H} {\bf{B}}_k \bar {\bf{w}}} \right),
    \\
    \lambda_{\sf num} &= \prod_{k = 1}^{K}\left({\bar {\bf{w}}^{\sf H} {\bf{A}}_k \bar {\bf{w}}}\right), \ 
    \lambda_{\sf den} = \left\{\frac{1}{K\ln2 }\sum_{k = 1}^{K} \left(\frac{\bar {\bf{w}}^{\sf H} {\bf{A}}_{{\sf c},k} \bar {\bf{w}}}{\bar {\bf{w}}^{\sf H} {\bf{B}}_{{\sf c},k} {\bar{\bf{w}}}} \right)^{ -\frac{1}{\tau}} \right\}^{\frac{\tau}{\ln 2}} \prod_{k = 1}^{K}\left({\bar {\bf{w}}^{\sf H} {\bf{B}}_k \bar {\bf{w}}}\right).
\end{align}
\end{lemma}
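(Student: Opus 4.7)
My plan is to treat \eqref{eq:problem_new_Blk_q} as an unconstrained stationary-point problem in $\bar{\bf w}$, derive the Wirtinger gradient, equate it to zero, and then rescale the resulting identity into the NEP form stated in \eqref{eq:lem_kkt_stack}. The justification for ignoring the norm constraint is already provided in the text: every Rayleigh quotient $\bar{\bf w}^{\sf H}{\bf M}\bar{\bf w}/\bar{\bf w}^{\sf H}{\bf N}\bar{\bf w}$ is invariant under $\bar{\bf w}\mapsto c\,\bar{\bf w}$, so the objective is homogeneous of degree zero and the Lagrange multiplier of $\|\bar{\bf w}\|=1$ must vanish. Hence the first-order condition is simply $\partial_{\bar{\bf w}^{*}}f(\bar{\bf w})=\mathbf{0}$ with $f$ denoting the objective in \eqref{eq:problem_new_Blk_q}.

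\textbf{Gradient evaluation.} I would split $f=f_{\sf c}+f_{\sf p}$ where $f_{\sf c}=-\tau\ln\sum_{k}\bigl(\bar{\bf w}^{\sf H}{\bf A}_{{\sf c},k}\bar{\bf w}/\bar{\bf w}^{\sf H}{\bf B}_{{\sf c},k}\bar{\bf w}\bigr)^{-1/(\tau\ln 2)}$ and $f_{\sf p}=(1/\ln 2)\sum_{k}\ln\bigl(\bar{\bf w}^{\sf H}{\bf A}_{k}\bar{\bf w}/\bar{\bf w}^{\sf H}{\bf B}_{k}\bar{\bf w}\bigr)$. Using the elementary identity $\partial_{\bar{\bf w}^{*}}(\bar{\bf w}^{\sf H}{\bf M}\bar{\bf w})={\bf M}\bar{\bf w}$ and the chain rule on the composed log--sum--power structure of $f_{\sf c}$, I would obtain, after cancellation of the $-1/(\tau\ln 2)$ factor from the outer $-\tau\ln(\cdot)$, the softmax-type weights
\begin{equation*}
\omega_{k}(\bar{\bf w})=\frac{\exp\!\bigl(-\tfrac{1}{\tau}\log_{2}\!\tfrac{\bar{\bf w}^{\sf H}{\bf A}_{{\sf c},k}\bar{\bf w}}{\bar{\bf w}^{\sf H}{\bf B}_{{\sf c},k}\bar{\bf w}}\bigr)}{\sum_{\ell}\exp\!\bigl(-\tfrac{1}{\tau}\log_{2}\!\tfrac{\bar{\bf w}^{\sf H}{\bf A}_{{\sf c},\ell}\bar{\bf w}}{\bar{\bf w}^{\sf H}{\bf B}_{{\sf c},\ell}\bar{\bf w}}\bigr)},
\end{equation*}
and the gradient condition $\ln 2\cdot\partial_{\bar{\bf w}^{*}}f=\mathbf{0}$ rearranges into
\begin{equation*}
\Bigl[\sum_{k}\omega_{k}\tfrac{{\bf A}_{{\sf c},k}}{\bar{\bf w}^{\sf H}{\bf A}_{{\sf c},k}\bar{\bf w}}+\sum_{k}\tfrac{{\bf A}_{k}}{\bar{\bf w}^{\sf H}{\bf A}_{k}\bar{\bf w}}\Bigr]\bar{\bf w}=\Bigl[\sum_{k}\omega_{k}\tfrac{{\bf B}_{{\sf c},k}}{\bar{\bf w}^{\sf H}{\bf B}_{{\sf c},k}\bar{\bf w}}+\sum_{k}\tfrac{{\bf B}_{k}}{\bar{\bf w}^{\sf H}{\bf B}_{k}\bar{\bf w}}\Bigr]\bar{\bf w}.
\end{equation*}

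\textbf{Rescaling into NEP form.} Denoting the bracketed matrices by $\tilde{\bf A}(\bar{\bf w})$ and $\tilde{\bf B}(\bar{\bf w})$, the identity above reads $\tilde{\bf A}\bar{\bf w}=\tilde{\bf B}\bar{\bf w}$, which is already a nonlinear generalized eigenvalue problem with trivial eigenvalue one. To match the normalization in \eqref{eq:lem_A_kkt_Q}--\eqref{eq:lem_B_kkt_Q}, I would multiply the left side by $\lambda_{\sf num}(\bar{\bf w})=\prod_{k}\bar{\bf w}^{\sf H}{\bf A}_{k}\bar{\bf w}$ and the right side by $\lambda_{\sf den}(\bar{\bf w})$ as defined in the statement, yielding ${\bf A}_{\sf KKT}\bar{\bf w}=(\lambda_{\sf num}/\lambda_{\sf den}){\bf B}_{\sf KKT}\bar{\bf w}$, and the closed-form expression for $\lambda(\bar{\bf w})$ then follows by simplifying the ratio. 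The main obstacle I expect is purely bookkeeping: carefully tracking the nested chain rule through $f_{\sf c}$ so that the inner exponent $-1/(\tau\ln 2)$, the outer factor $-\tau$, and the $\ln 2$ from converting $\log_{2}$ to $\ln$ cancel in the right places to give the softmax weights exactly as in \eqref{eq:lem_A_kkt_Q}, and then verifying that the multiplicative normalization producing $\lambda_{\sf num}$ and $\lambda_{\sf den}$ matches the stated $\lambda(\bar{\bf w})$. Beyond this algebraic care, no additional convexity argument or KKT multiplier analysis is required, because scale-invariance already reduces the problem to a plain stationary-gradient condition.
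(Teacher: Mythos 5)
Your proposal follows essentially the same route as the paper's proof: exploit scale invariance to drop the norm constraint, take the Wirtinger derivative of the LogSumExp term plus the private-rate sum to obtain the softmax-weighted stationarity condition $\tilde{\bf A}(\bar{\bf w})\bar{\bf w}=\tilde{\bf B}(\bar{\bf w})\bar{\bf w}$, and then rescale by $\lambda_{\sf num}$, $\lambda_{\sf den}$ to recover \eqref{eq:lem_kkt_stack}, which is exactly the paper's Appendix A argument. The only items worth tightening are the loosely worded rescaling step (one should say ${\bf A}_{\sf KKT}=\lambda_{\sf num}\tilde{\bf A}$ and ${\bf B}_{\sf KKT}=\lambda_{\sf den}\tilde{\bf B}$, so that $\tilde{\bf A}\bar{\bf w}=\tilde{\bf B}\bar{\bf w}$ gives ${\bf A}_{\sf KKT}\bar{\bf w}=(\lambda_{\sf num}/\lambda_{\sf den}){\bf B}_{\sf KKT}\bar{\bf w}$, rather than multiplying the two sides by different scalars) and an explicit remark that ${\bf B}_{\sf KKT}$ is invertible (its blocks are Hermitian positive definite owing to the $\sigma^{2}/P$ identity term), which the paper notes before passing to the stated eigenvalue form.
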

\begin{proof}
See Appendix \ref{proof:lem1}.
\end{proof}
Lemma \ref{lem:main} states that the first-order optimality condition can be presented as a nonlinear eigenvalue problem for ${\bf{B}}^{-1}_{\sf KKT} (\bar {\bf{w}}){\bf{A}}_{\sf KKT}(\bar {\bf{w}})$.
We note that the objective function of the problem \eqref{eq:problem_new_Blk_q} is equal to $\ln{\lambda(\bar {\bf{w}})}$.
This indicates that finding the leading eigenvector of \eqref{eq:lem_kkt_stack} is equivalent to finding the best local optimal solution because any eigenvector of \eqref{eq:lem_kkt_stack} is indeed one of the stationary points of the problem \eqref{eq:problem_new_Blk_q}, which is stated as follows:
\begin{proposition}
    \label{prof:main} Denoting the leading eigenvector for the problem \eqref{eq:lem_kkt_stack} to be $\bar {\bf{w}}^{\star }$ and its corresponding eigenvalue to be $\lambda^\star$, i.e., ${\bf{B}}_{\sf KKT}^{-1}(\bar {\bf{w}}^{\star}){\bf{A}}_{\sf KKT}(\bar {\bf{w}}^{\star}) \bar {\bf{w}}^{\star} = \lambda^{\star} \bar {\bf{w}}^{\star}$, the eigenvector $\bar {\bf{w}}^{\star}$ is the stationary point that achieves the best local optimal solution of the problem \eqref{eq:problem_new_Blk_q}.
\end{proposition}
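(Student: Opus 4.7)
The plan is to leverage Lemma~\ref{lem:main}, which already characterizes the stationary points of problem \eqref{eq:problem_new_Blk_q} as exactly the eigenvectors of the nonlinear eigenvalue problem in \eqref{eq:lem_kkt_stack}. Given this characterization, the remaining task is to show that (i) any eigenpair $(\bar{\bf w}^\star, \lambda^\star)$ is a stationary point of \eqref{eq:problem_new_Blk_q}, and (ii) the objective value evaluated at such an eigenvector coincides with (a monotonically increasing function of) $\lambda^\star$. Once this equivalence is established, selecting the leading eigenvector is the same as selecting the stationary point with the largest objective value, which proves the claim.

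The first step is to invoke Lemma~\ref{lem:main} in the converse direction: every $\bar{\bf w}^\star$ satisfying ${\bf B}_{\sf KKT}^{-1}(\bar{\bf w}^\star){\bf A}_{\sf KKT}(\bar{\bf w}^\star)\bar{\bf w}^\star = \lambda^\star \bar{\bf w}^\star$ must obey the first-order optimality condition derived in the lemma, because the construction of ${\bf A}_{\sf KKT}$, ${\bf B}_{\sf KKT}$, and $\lambda$ in \eqref{eq:lem_A_kkt_Q}--\eqref{eq:lem_lambda} was obtained directly from setting the gradient of the objective of \eqref{eq:problem_new_Blk_q} to zero. Hence each eigenvector is a stationary point, and conversely every stationary point is an eigenvector, so the stationary set of \eqref{eq:problem_new_Blk_q} is in bijection with the spectrum of the NEP.

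The second step is the algebraic identity that links the objective value to the eigenvalue. Substituting the expression for $\lambda(\bar{\bf w})$ from \eqref{eq:lem_lambda} into $\ln \lambda(\bar{\bf w})$, I expect to recover
\begin{align*}
\ln \lambda(\bar{\bf w}) = -\tau\ln\!\left(\sum_{k=1}^{K}\!\left(\frac{\bar{\bf w}^{\sf H}{\bf A}_{{\sf c},k}\bar{\bf w}}{\bar{\bf w}^{\sf H}{\bf B}_{{\sf c},k}\bar{\bf w}}\right)^{-\frac{1}{\tau\ln 2}}\right) + \frac{1}{\ln 2}\sum_{k=1}^{K}\ln\!\left(\frac{\bar{\bf w}^{\sf H}{\bf A}_k\bar{\bf w}}{\bar{\bf w}^{\sf H}{\bf B}_k\bar{\bf w}}\right)
\end{align*}
up to an additive constant independent of $\bar{\bf w}$ (arising from the $1/(K\ln 2)$ normalization inside the log). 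This expression is precisely the objective of \eqref{eq:problem_new_Blk_q}, so the objective value at any stationary point equals $\ln \lambda(\bar{\bf w}^\star)$ plus a constant.

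The third and final step is to combine the previous observations: since $\ln(\cdot)$ is strictly monotonically increasing, maximizing the objective over the set of stationary points is equivalent to maximizing $\lambda^\star$ over the spectrum of the NEP. The leading (principal) eigenvector $\bar{\bf w}^\star$ achieves the largest $\lambda^\star$ by definition, hence it attains the best objective value among all stationary points, which is exactly the best local optimum. The main obstacle I anticipate is the bookkeeping in the second step: one must carefully track the factors of $\tau$, $\ln 2$, and the $1/K$ normalization to confirm that the algebraic form of $\ln \lambda(\bar{\bf w})$ matches the objective of \eqref{eq:problem_new_Blk_q} up to a term that does not affect the ranking. Once this identity is verified, the rest of the argument follows immediately from the monotonicity of the logarithm and the definition of the leading eigenvector.
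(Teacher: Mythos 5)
Your proposal follows essentially the same route the paper takes: the paper proves Proposition 1 only through the discussion preceding it, namely that every eigenvector of the NEP in \eqref{eq:lem_kkt_stack} is a stationary point by Lemma \ref{lem:main} and that the objective of \eqref{eq:problem_new_Blk_q} equals $\ln\lambda(\bar{\bf w})$ (up to constants and $\ln 2$ bookkeeping), so the leading eigenvalue picks out the stationary point with the largest objective. Your three steps reproduce exactly this argument, with the added (correct) care about the additive constant and $\tau$/$\ln 2$ factors, so the proposal is sound and matches the paper's approach.
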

Based on the observation, we propose a computationally efficient RSMA precoding algorithm that finds the best local optimal point to maximize the sum spectral efficiency.

\begin{algorithm} [t]
\caption{Quantized Generalized Power Iteration for Rate-Splitting (Q-GPI-RS)} \label{alg:main_QRS} 
{\bf{initialize}}: $\bar {\bf{w}}_{0}$\\
Set the iteration count $t = 0$.\\
\While {$\left\|\bar {\bf{w}}_{t+1} - \bar {\bf{w}}_{t} \right\| > \epsilon$ $\it{\&}$ $t \leq t_{\rm max}$}{
Build matrix $ {\bf{A}}_{\sf KKT} (\bar {\bf{w}}_{t})$ in \eqref{eq:lem_A_kkt_Q}\\
Build matrix $ {\bf{B}}_{\sf KKT} (\bar {\bf{w}}_{t})$ in \eqref{eq:lem_B_kkt_Q} \\
Compute $\bar {\bf{w}}_{t+1} = $ ${{\bf{B}}^{-1}_{\sf KKT} (\bar {\bf{w}}_{t}) {\bf{A}}_{\sf KKT} (\bar {\bf{w}}_{t}) \bar {\bf{w}}_{t}}$. \\
Normalize $\bar {\bf{w}}_{t+1} \leftarrow \frac{{\bf{B}}^{-1}_{\sf KKT} (\bar {\bf{w}}_{t}) {\bf{A}}_{\sf KKT} (\bar {\bf{w}}_{t}) \bar {\bf{w}}_{t}} {\| {\bf{B}}^{-1}_{\sf KKT} (\bar {\bf{w}}_{t}) {\bf{A}}_{\sf KKT} (\bar {\bf{w}}_{t}) \bar {\bf{w}}_{t} \|}$.\\
 $t \leftarrow t+1$.}
\Return{\ }{$\bar{\bf w}_t$}.
\end{algorithm}
\subsection{Quantized Generalized Power Iteration for  Rate-Splitting}
To identify the leading eigenvector of \eqref{eq:lem_kkt_stack}, we adopt the generalized power iteration-based method  \cite{choi:twc:20}.
Let $\bar \bw_t$ be the vectorized weighted precoder at the $t$-th iteration.
Then at the $t$-th iteration, the matrices ${\bf{A}}_{\sf KKT} (\bar {\bf{w}}_{t})$ and $ {\bf{B}}_{\sf KKT} (\bar {\bf{w}}_{t})$ are constructed using \eqref{eq:lem_A_kkt_Q} and \eqref{eq:lem_B_kkt_Q}, and then $\bar {\bf{w}}_{t+1}$ is updated as 
\begin{align}
    \bar {\bf{w}}_{t+1} = \frac{{\bf{B}}_{\sf KKT}^{-1} (\bar {\bf{w}}_{t}) {\bf{A}}_{\sf KKT} (\bar {\bf{w}}_{t}) \bar {\bf{w}}_{t}}{\| {\bf{B}}_{\sf KKT}^{-1} (\bar {\bf{w}}_{t}) {\bf{A}}_{\sf KKT} (\bar {\bf{w}}_{t}) \bar {\bf{w}}_{t} \|}.
\end{align}
The iteration is repeated until the algorithm satisfies the convergence criterion, ie., $\left\|\bar {\bf{w}}_{t+1} - \bar {\bf{w}}_{t} \right\| < \epsilon$ in which $\epsilon > 0$ denotes the threshold or reaches a maximum iteration count $t_{\rm max}$ which depends on a system requirement.
Algorithm~\ref{alg:main_QRS} summarizes the steps.
\begin{remark}[Algorithm Complexity]
    \label{remark:Algorithm complexity comparison}\normalfont
    The complexity of Q-GPI-RS depends on the calculation of ${\bf{B}}_{\sf KKT}^{-1} (\bar {\bf{w}})$.
    Since 
    the  matrix size is $N(K+1)\times N(K+1)$, the inversion of ${\bf{B}}_{\sf KKT}(\bar {\bf{w}})$ typically requires the complexity order of $\mathcal{O}((K+1)^3N^3)$.
    Noticing the block diagonal structure of ${\bf{B}}_{\sf KKT}(\bar {\bf{w}})$, only the complexity order of $\mathcal{O}((K+1)N^3)$ is required by  computing the inversion of each $N\times N$ sub-matrix separately 
    since there are $(K+1)$ sub-matrices.
    On the other hand, the convex relaxation methods which are based on QCQP \cite{joudeh2016sum,joude:twc:17} and CCCP \cite{li:jsac:20} have the complexity order of $\mathcal{O}(K^{3.5}N^{3.5})$ and $\mathcal{O}(N^6K^{0.5}2^{3.5K})$, respectively.
    In this regard, Q-GPI-RS is more efficient in computation compared to the other state-of-the-art methods.
\end{remark}
Prior to the numerical evaluation of the proposed method, we introduce an optimization approach based on direction extension of an existing state-of-the-art precoding approach  for performance comparison in the next section.

\section{Extension of Existing Approach: WMMSE Approach}    
\label{sec:q-wmmse-ao}
 Here, we briefly present the extension of the WMMSE-based alternating optimization (WMMSE-AO) approach \cite{joudeh2016sum} for precoding optimization in the considered system.
To solve the non-convex problem of maximizing the sum spectral efficiency with respect to the precoder, \cite{joudeh2016sum} developed the WMMSE-AO method without considering the quantization error.
Adopting the similar principle of the WMMSE-based algorithms in  \cite{joudeh2016sum} and \cite{dizdar2021rate}, we also solve our optimization problem.
To this end, we first denote that ${\hat s}_{{\sf c},k}$ and ${\hat s}_{k}$ are estimated values for ${s}_{{\sf c},k}$ and ${s}_{k}$.
Defining the scalar equalizers of the common and private streams for user $k$ as $g_{{\sf c},k}$ and $g_k$, respectively,  we compute the MSEs of the common stream $\epsilon_{{\sf c},k}$ and private stream $\epsilon_k$ received at user $k$ as
\begin{align}
    \label{eq:MSE of the common messages}
    \epsilon_{{\sf c},k} \!=\! \mathbb{E}[|{\hat s}_{{\sf c},k}\!-\!{s}_{\sf c}|^2] = \mathbb{E}[|g_{{\sf c},k}y_{{\sf q},k}\!-\!{s}_{\sf c}|^2] 
  =|g_{{\sf c},k}|^2T_{{\sf c},k} \!-\! 2{\sf Re}\left\{\sqrt{P}g_{{\sf c},k}\alpha_{{\sf ADC},k}{{{\bf h}_k^{\sf H}}}{\bf{\Phi}}_{{\alpha }_{\sf DAC}}{\bf f}_{0}\right\}\!+\!1
\end{align}
and
\begin{align}
    \label{eq:MSE of the private messages}
    \epsilon_k = \mathbb{E}[|{\hat s}_{k}-{s}_{k}|^2] = \mathbb{E}[|g_{k}y_{{\sf q},k}-{s}_{k}|^2] 
    = |g_k|^2T_k - 2{\sf Re}\left\{\sqrt{P}g_k\alpha_{{\sf ADC},k}{{{\bf h}_k^{\sf H}}}{\bf{\Phi}}_{{\alpha }_{\sf DAC}}{\bf f}_{k}\right\} + 1.
\end{align}
Here, $T_{{\sf c},k}$ and $T_k$ are defined as
\begin{align}
    \label{eq: T_c}
    &T_{{\sf c},k} =P\alpha_{{\sf ADC},k}^2\sum_{i=0}^{K}|{{{\bf h}_k^{\sf H}}}{\bf{\Phi}}_{{\alpha }_{\sf DAC}}{\bf f}_{i}|^2+{\alpha }_{{\sf ADC},k}^2{\bf h}_k^{\sf H}{\bf R}_{{\bf q}_{\sf DAC}{\bf q}_{\sf DAC}}{\bf h}_k + r_{{\sf q}_{{\sf ADC},k}{\sf q}_{{\sf ADC},k}}+\alpha_{{\sf ADC},k}^2\sigma^2
\end{align}
and
\begin{align}
    &T_k  = P\alpha_{{\sf ADC},k}^2\sum_{i=1}^{K}|{{{\bf h}_k^{\sf H}}}{\bf{\Phi}}_{{\alpha }_{\sf DAC}}{\bf f}_{i}|^2\!+\!{\alpha }_{{\sf ADC},k}^2{\bf h}_k^{\sf H}{\bf R}_{{\bf q}_{\sf DAC}{\bf q}_{\sf DAC}}{\bf h}_k \!+\! r_{{\sf q}_{{\sf ADC},k}{\sf q}_{{\sf ADC},k}}\!+\!\alpha_{{\sf ADC},k}^2\sigma^2 .
\end{align}
Then, we achieve the minimum MSEs when $g^{\sf MMSE}_{{\sf c},k} = \sqrt{P}\alpha_{{\sf ADC},k}{\bf f}_{0}^{\sf H}{\bf{\Phi}}_{{\alpha }_{\sf DAC}}^{\sf H}{{\bf h}_k}T_{{\sf c},k}^{-1}$ and $g^{\sf MMSE}_k = \sqrt{P}\alpha_{{\sf ADC},k}{\bf f}_{k}^{\sf H}{\bf{\Phi}}_{{\alpha }_{\sf DAC}}^{\sf H}{{\bf h}_k}T_k^{-1}$.
Based on \eqref{eq: T_c}, the MMSE of the common stream is 
\begin{align}
    \epsilon_{{\sf c},k}^{\sf MMSE} &= T_{{\sf c},k}^{-1}(T_{{\sf c},k}-P\alpha_{{\sf ADC},k}^2|{{{\bf h}_k^{\sf H}}}{\bf{\Phi}}_{{\alpha }_{\sf DAC}}{\bf f}_{0}|^2)
\end{align}
and the MMSE of the private stream for user $k$ is 
\begin{align}
    \epsilon_k^{\sf MMSE} &= T_k^{-1}(T_k-P\alpha_{{\sf ADC},k}^2|{{{\bf h}_k^{\sf H}}}{\bf{\Phi}}_{{\alpha }_{\sf DAC}}{\bf f}_{k}|^2).
\end{align}
Based on \eqref{eq:MSE of the common messages}, the augmented WMSE of the common stream is  defined by
\begin{align}
    \xi_{{\sf c},k} =&\; u_{{\sf c},k}\epsilon_{{\sf c},k} - \log_2(u_{{\sf c},k})
    \\
    \nonumber
    =& \;P\!\sum_{i=0}^{K}\!{\bf f}_{i}^{\sf H}\left(\alpha_{{\sf ADC},k}^2u_{{\sf c},k}|g_{{\sf c},k}|^2{\bf{\Phi}}_{{\alpha }_{\sf DAC}}^{\sf H}{\bf h}_k{\bf h}_k^{\sf H}{\bf{\Phi}}_{{\alpha }_{\sf DAC}}\!\right){\bf f}_{i}-2{\sf Re}\left\{\sqrt{P}u_{{\sf c},k}g_{{\sf c},k}\alpha_{{\sf ADC},k}{{{\bf h}_k^{\sf H}}}{\bf{\Phi}}_{{\alpha }_{\sf DAC}}{\bf f}_{0}\right\}
    \\
    \nonumber
    &+{\alpha }_{{\sf ADC},k}^2u_{{\sf c},k}|g_{{\sf c},k}|^2{\bf h}_k^{\sf H}{\bf R}_{{\bf q}_{\sf DAC}{\bf q}_{\sf DAC}}{\bf h}_k + u_{{\sf c},k}|g_{{\sf c},k}|^2r_{{\sf q}_{{\sf ADC},k}{\sf q}_{{\sf ADC},k}}+\alpha_{{\sf ADC},k}^2u_{{\sf c},k}|g_{{\sf c},k}|^2\sigma^2
    \\
    &+u_{{\sf c},k}-\log_2(u_{{\sf c},k}).
\end{align}
Similarly, using \eqref{eq:MSE of the private messages}, the augmented WMSE of the private stream for user $k$ follows as 
\begin{align}
    \xi_k =&\; u_k\epsilon_k - \log_2(u_k)
        \\
    =&\;P\!\sum_{i=1}^{K}{\bf f}_{i}^{\sf H}\left(\alpha_{{\sf ADC},k}^2u_{k}|g_k|^2{\bf{\Phi}}_{{\alpha }_{\sf DAC}}^{\sf H}{\bf h}_k{\bf h}_k^{\sf H}{\bf{\Phi}}_{{\alpha }_{\sf DAC}}\right){\bf f}_{i}\ - 2{\sf Re}\left\{\sqrt{P}u_kg_k\alpha_{{\sf ADC},k}{{{\bf h}_k^{\sf H}}}{\bf{\Phi}}_{{\alpha }_{\sf DAC}}{\bf f}_{k}\right\}
    \\
    \nonumber
    &+{\alpha }_{{\sf ADC},k}^2u_k|g_k|^2{\bf h}_k^{\sf H}{\bf R}_{{\bf q}_{\sf DAC}{\bf q}_{\sf DAC}}{\bf h}_k + u_k|g_k|^2r_{{\sf q}_{{\sf ADC},k}{\sf q}_{{\sf ADC},k}}+\alpha_{{\sf ADC},k}^2u_k|g_k|^2\sigma^2+u_k-\log_2(u_k).
\end{align}
We obtain the optimal weights to achieve the minimum of $\xi_{{\sf c},k}$ and $\xi_k$ as $u_{{\sf c},k} = 1/\epsilon_{{\sf c},k}^{\sf MMSE}$ and $u_k = 1/\epsilon_k^{\sf MMSE}$.
Consequently, for given equalizers $\xi_{{\sf c},k}$, $\xi_k$, and weights $u_{{\sf c},k}$, $u_k$,  the sum spectral efficiency maximization problem is solved by the following WMSE minimization problem:
\begin{align}
    \label{eq:SE maximization for WMSE minimization}
    \mathop{{\text{minimize}}}_{{\bf{f}}_{0}, {\bf{f}}_1, \cdots,{\bf{f}}_K,\xi_{\sf c}} \;\; \xi_{\sf c}+{\sum_{k=1}^{K}}\xi_k\quad  \mathop{{\text{subject to}}}\;\;{\text {tr}}\left( {\bf \Phi}_{\alpha_{\sf DAC}} {\bf F}{\bf F}^{\sf H}\right) \leq 1,
    \;\xi_{{\sf c},k} \leq \xi_{\sf c}, \forall k \in \CMcal{K}.
\end{align}
We remark that the problem \eqref{eq:SE maximization for WMSE minimization} is the QCQP which can be solved by  CVX.
Accordingly, we compute the equalizers,  weights, and precoders in the alternating manner as follows:
\begin{enumerate}
 \item {\it{Update of equalizers and weights:}} we update the equalizers and weights  by computing $g^{\sf MMSE}_{{\sf c},k}$, $g^{\sf MMSE}_{k}$, $u_{{\sf c},k} = 1/\epsilon_{{\sf c},k}^{\sf MMSE}$ and $u_k = 1/\epsilon_k^{\sf MMSE}$
for given precoding vectors.
\item {\it{Update of precoders and $\xi_{\sf c}$}}: then, the precoders and $\xi_{\sf c}$ can be derived by solving \eqref{eq:SE maximization for WMSE minimization} via CVX for given equalizers and weights.
    
\item {\it{Repeat  steps 1 and 2:}} the steps 1 and 2 are repeated until convergence.
\end{enumerate}


\section{Numerical Results}
In this section, we compare the sum spectral efficiency of the proposed Q-GPI-RS and existing baseline methods.
The channel vector ${{\bf{h}}_{k}}$ is computed based on its spatial covariance matrix ${\bf R}_k=\mathbb{E}{\left[{\bf{h}}_k{\bf{h}}_k^{\sf H}\right]}$.
We employ the one-ring channel model to generate ${\bf{R}}_{k}$ \cite{adhi:tit:13};
The channel covariance matrix ${\bf{R}}_{k}$ at the $n$-th antenna and $m$-th antenna is defined as $\left[{\bf{R}}_{k}\right]_{n,m} = \frac{1}{2\vartriangle_k}\int_{\theta_k-\vartriangle_k}^{\theta_k+\vartriangle_k} e^{-j\frac{2\pi}{\psi}\Psi\left(x\right)\left({\bf r}_n - {\bf r}_m\right)}\, dx,$
where   $\psi$ is a signal wavelength, $\triangle_k$ is the angular spread of user $k$, $\theta_k$ is angle-of-departure (AoD) of user $k$, $\Psi\left(x\right)=\left[\cos\left(x\right),\sin{\left(x\right)}\right]$, and ${\bf{r}}_n$ is the position vector of $n$-th antenna.
Using the Karhunen-Loeve model, the channel vector ${\bf{h}}_k$ is decomposed as ${\bf{h}}_k = {\bf{U}}_k {\pmb{\Lambda}}_k^{\frac{1}{2}}{\bf{g}}_k$,
where ${\bf{U}}_k \in \mathbb{C}^{N\times r_k}$ is eigenvectors of ${\bf{R}}_k$, ${\pmb{\Lambda}}_k \in \mathbb{C}^{r_k \times r_k}$ is a diagonal matrix of eigenvalues of ${\bf{R}}_k$, each entry of ${\bf{g}}_k$ follows 
${\mathcal {CN} (0,1)}$,
${\bf{g}}_k$ $\in \mathbb{C}^{r_k}$ is identically distributed, and ${r_k}$ is the rank of ${\bf{R}}_k$.
By using a block fading channel model, ${\bf{g}}_k$ is considered to be constant within one transmission block.
The value of $\theta_k$ is varied according to the user's location.
The AoD $\theta_k$ follows IID uniform distribution between $0$ and $\pi$ $\forall k$ when we consider that users are randomly located.
In the case of correlated users, the differences of $\theta_k$ for all users are randomly distributed within $\pi/6$.
We set the simulation setting as $\vartriangle_k=\pi/6$, $\epsilon=0.01$, and  $\sigma^2 = 1$.

For initialization, we set the precoding vector $\bff_k = \bh_k$, i.e., maximum ratio transmission (MRT).
For the common stream, in particular, the average of channel vectors is adopted, i.e., $\bff_0 = \frac{1}{K}\bH \cdot {\bf 1}_{K\times1}$.
Then the stacked precoding vector $\bar \bw_0$ is initialized based on such $\bF$.
Let us define an effective channel vector as ${\bf h}^{\rm eff}_{k} = {\bf{\Phi}}_{{\alpha }_{\sf DAC}}^{\sf H}{{\bf h}_{k}}{\alpha }_{{\sf ADC},k}^{\sf H}$.
Then, the comparing baseline methods are:
\begin{itemize}
    \item {\bf Q-WMMSE-AO} (RSMA): The algorithm is introduced in Section~\ref{sec:q-wmmse-ao}. 
    Q-WMMSE-AO approach is aligned with the ADMM-based algorithm in \cite{dizdar2021rate}  when the radar pattern design is omitted and ADC quantization error is further incorporated.
    \item {\bf{WMMSE-AO}} (RSMA): 
   The WMMSE alternating optimization (AO) approach is the state-of-the-art method for RSMA precoding in \cite{joudeh2016sum}.
    \item {\bf{WMMSE}} (SDMA): The WMMSE-based precoding method is one of the most well-known precoding method proposed in \cite{chris:twc:08}.
    
    \item {\bf{Q-GPI-SEM}} (SDMA): 
The quantization-aware GPI-based spectral efficiency maximization (Q-GPI-SEM) is proposed 
in the low-resolution ADC/DAC regime for SDMA \cite{choi2021energyIOTJ}.

 \item {\bf{Q-MRT}}/{\bf{Q-ZF}}/{\bf{Q-RZF}} (SDMA): The conventional linear precoders based on the effective channel ${\bf h}^{\rm eff}_{k}$ such as quantization-aware MRT, ZF, and RZF are evaluated for the SDMA.  
\end{itemize}

\begin{figure}[!t]\centering
	\subfigure{\resizebox{0.42\columnwidth}{!}{\includegraphics{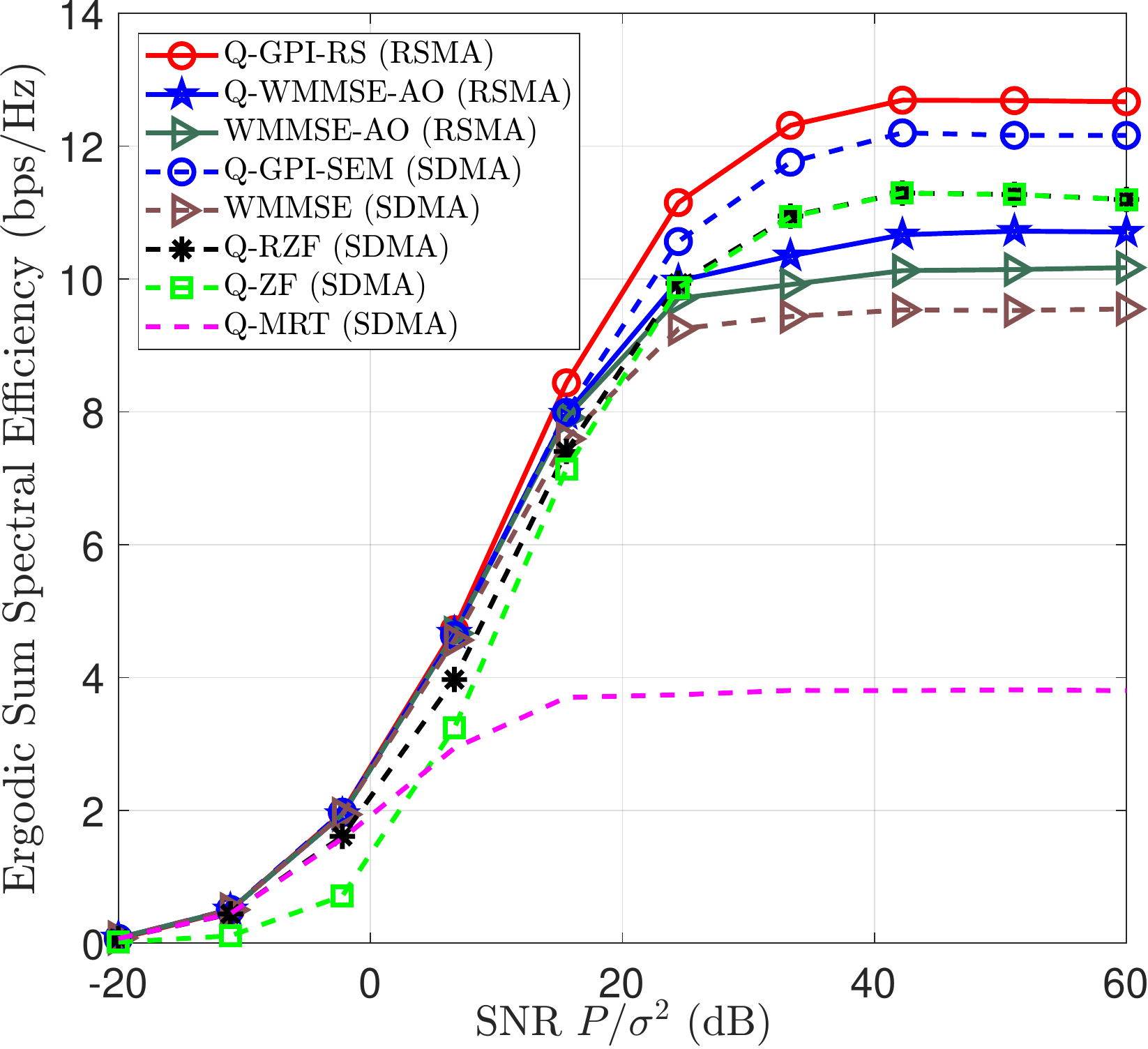}}}
	 	\vspace{-0.5em}
	\caption{The  sum spectral efficiency versus SNR for $N = 4$ AP antennas, $ K = 2$ users, $b_{{\sf DAC},n} = 4, \forall n$, and $b_{{\sf ADC},k} = 6, \forall k$.} 
 	\label{fig:N4K2RanCHSumSEDAC4ADC6}
 	\vspace{-1em}
\end{figure}
We define ${\rm SNR} = P/\sigma^2$.
Regarding the approximation parameter $\tau$, we first find the value  from the experimental results to the system configuration.
Then we fix $\tau$ with the obtained value for each evaluation point without online tuning of $\tau$.
The spectral efficiency in simulation results is computed based on \eqref{eq:ergodic common message} and \eqref{eq:ergodic private message} without the approximation of the minimum function.

\subsection{Homogeneous Quantization Resolution}

In Fig.~\ref{fig:N4K2RanCHSumSEDAC4ADC6}, we consider randomly located users for $N = 4$ and $K = 2$ with homogeneous quantization bits where $b_{{\sf DAC},n} = 4,\ \forall n$ and $b_{{\sf ADC},k} = 6,\ \forall k$.
As shown in Fig.~\ref{fig:N4K2RanCHSumSEDAC4ADC6}, Q-GPI-RS achieves the highest spectral efficiency.
Since  Q-WMMSE-AO cannot guarantee the best local optimal point, Q-WMMSE-AO  reveals the performance limitation compared to Q-GPI-RS even though the quantization effect is taken into consideration.
In addition, WMMSE-AO approach shows further performance degradation because WMMSE-AO method is designed without considering the quantization error.
As the SNR increases, Q-RZF and Q-ZF outperform the WMMSE-based methods.
Recalling that Q-RZF and Q-ZF are the SDMA precoders, this performance inversion indicates the importance of the accurate optimization of the RSMA precoder.
Accordingly, with properly designed RSMA precoding, the proposed Q-GPI-RS method shows the gain of RSMA compared to Q-GPI-SEM which is based on classical SDMA.
Such an RSMA gain is also observed by  comparing  WMMSE-AO with WMMSE under coarse quantization.
Therefore, the proposed Q-GPI-RS achieves the highest performance, and  RSMA  provides a noticeable gain over SDMA.
\begin{figure}[!t]\centering
	\subfigure{\resizebox{0.42\columnwidth}{!}{\includegraphics{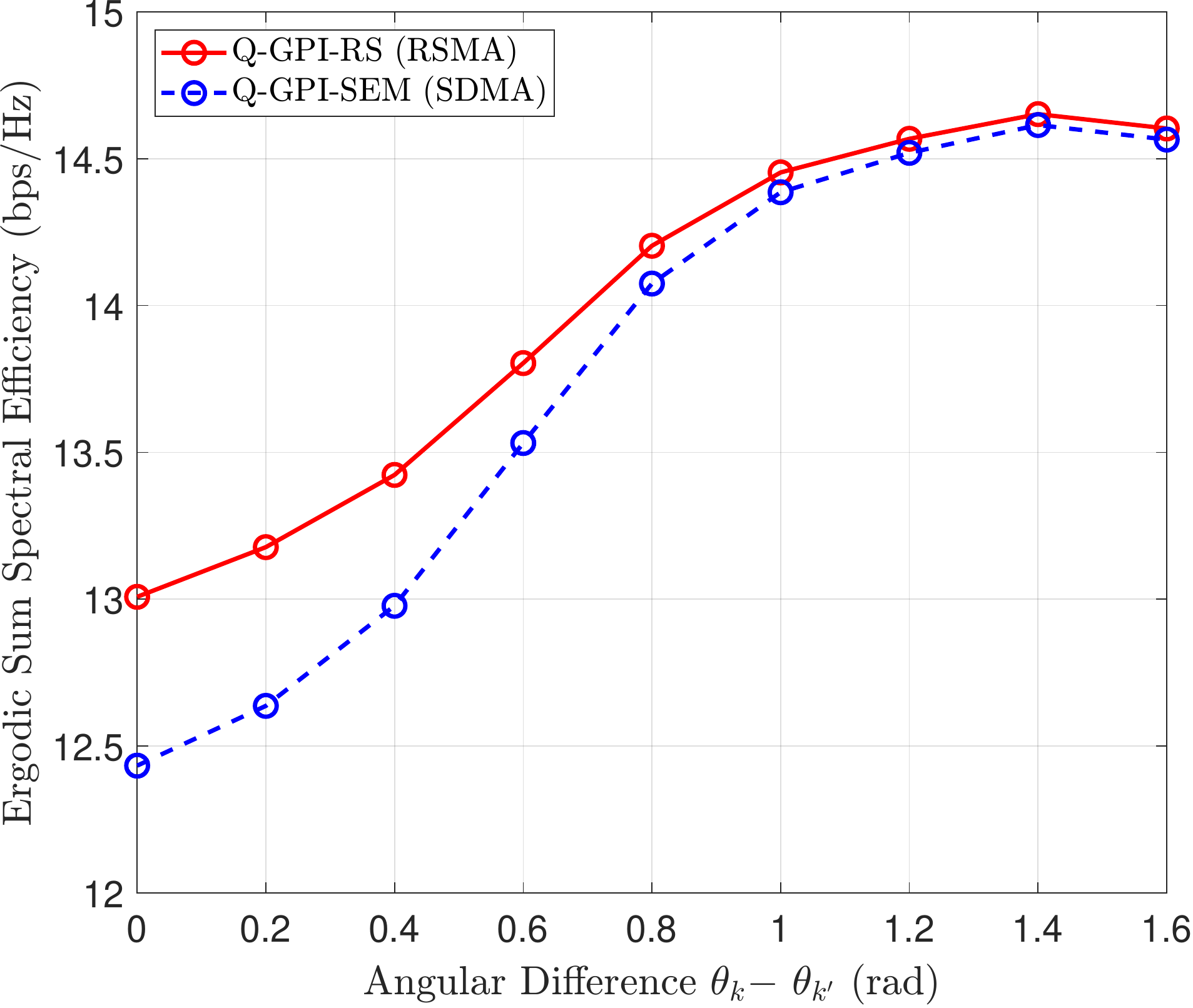}}}
	 	\vspace{-0.5em}
	\caption{The  sum spectral efficiency versus angular difference of users for $N = 4$ AP antennas, $ K = 2$ users, $b_{{\sf DAC},n} = 4, \forall n$, $b_{{\sf ADC},k} = 8, \forall k$, and SNR$ = 50$ dB.} 
 	\label{fig:N4K2AngularDAC4ADC8}
 	 	\vspace{-1em}
\end{figure}
In Fig.~\ref{fig:N4K2AngularDAC4ADC8}, we evaluate the spectral efficiency according to the user's angular difference $(\theta_k - \theta_{k'})$ for $N = 4$, $K = 2$, $b_{{\sf DAC},n} = 4,\ \forall n$, and $b_{{\sf ADC},k} = 8,\ \forall k$.
As shown in Fig.~\ref{fig:N4K2AngularDAC4ADC8}, the gap between Q-GPI-RS and the Q-GPI-SEM increases as the angular difference decreases.
In other words, the gain of RSMA increases as  user channel correlation increases, which was analytically shown for the perfect quantization system  in \cite{clerckx2019rate}.
This phenomenon occurs because, for highly correlated users, RSMA exploits the common stream 
to reduce the inter-user interference which is more severe when channel correlation is high.
SDMA, however, suffers from high inter-user interference.
User channels are typically correlated with small angular differences among the users in the indoor communications \cite{zafari2019survey} which are often the environment of low-power applications such as IoT communications.
Accordingly,
we assume the correlated channels where the angular difference between users is less than $\pi/6$  in the rest of the simulations.

\subsection{Heterogeneous Quantization Resolution}

\begin{figure}[!t]\centering
	\subfigure{\resizebox{0.42\columnwidth}{!}{\includegraphics{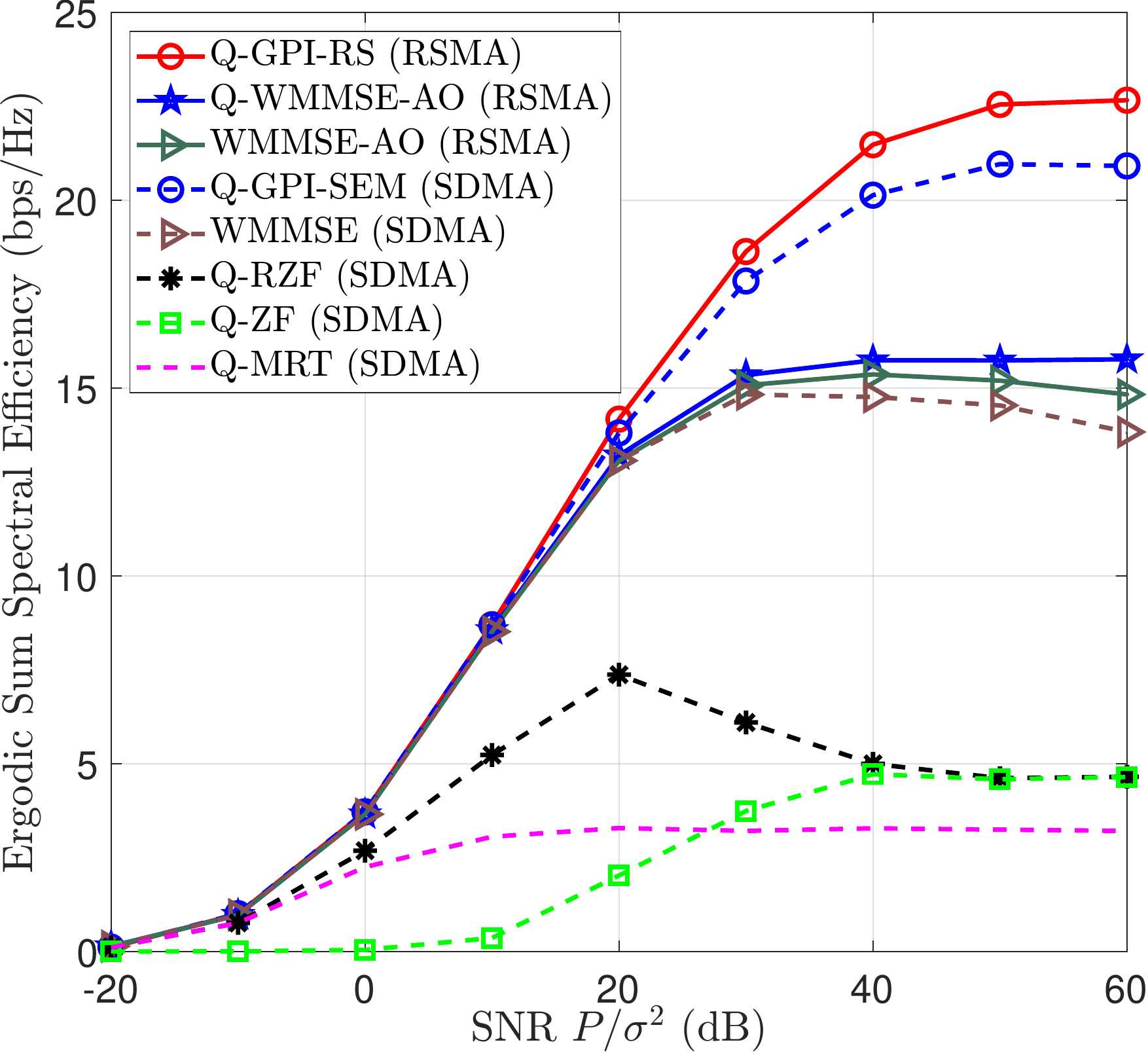}}}
	 	\vspace{-0.5em}
	\caption{The  sum spectral efficiency versus SNR for $N = 6$ AP antennas, $ K = 4$ users,  the numbers of DACs bits are set uniformly randomly from 2 to 8 bits, and $b_{{\sf ADC},k} = 8, \forall k$.} 
 	\label{fig:N6K4coCHSumSEheteroDACADC8}
 	 	\vspace{-1em}
\end{figure}

We analyze the spectral efficiency when DACs at the AP are configured with heterogeneous quantization bits.
In Fig.~\ref{fig:N6K4coCHSumSEheteroDACADC8}, for $N = 6$, $K = 4$, and $b_{{\sf ADC},k} = 8,\ \forall k$ we consider that the number of DAC bits is set uniformly randomly from 2 to 8 bits.
As shown in Fig.~\ref{fig:N6K4coCHSumSEheteroDACADC8}, Q-GPI-RS achieves the highest spectral efficiency than the other baseline methods with greater improvement compared to the previous random channel case.
In addition, the spectral efficiency of Q-WMMSE-AO is higher than that of WMMSE-AO by incorporating the quantization error.
We also note that WMMSE-AO decreases in the high SNR regime since WMMSE-AO method is developed without considering the quantization error.
Comparing RSMA with SDMA, Q-GPI-RS reveals the gain of RSMA from Q-GPI-SEM since the proposed method has better capability in reducing the inter-user interference by employing the common stream as the SNR increases.
The other RSMA algorithms, i.e., Q-WMMSE-AO  and WMMSE-AO, also provide higher spectral efficiency than WMMSE and the other conventional SDMA-based quantization-aware linear precoders because of the higher channel correlation, which is not the case in Fig.~\ref{fig:N4K2RanCHSumSEDAC4ADC6}.
Therefore, for randomly distributed heterogeneous DAC bits with correlated channels, the proposed Q-GPI-RS achieves the highest performance, and  RSMA  provides a noticeable gain over SDMA.

\begin{figure}[!t]\centering
	\begin{subfigure}[Sum spectral efficiency ]{\resizebox{0.45\columnwidth}{!}{\includegraphics{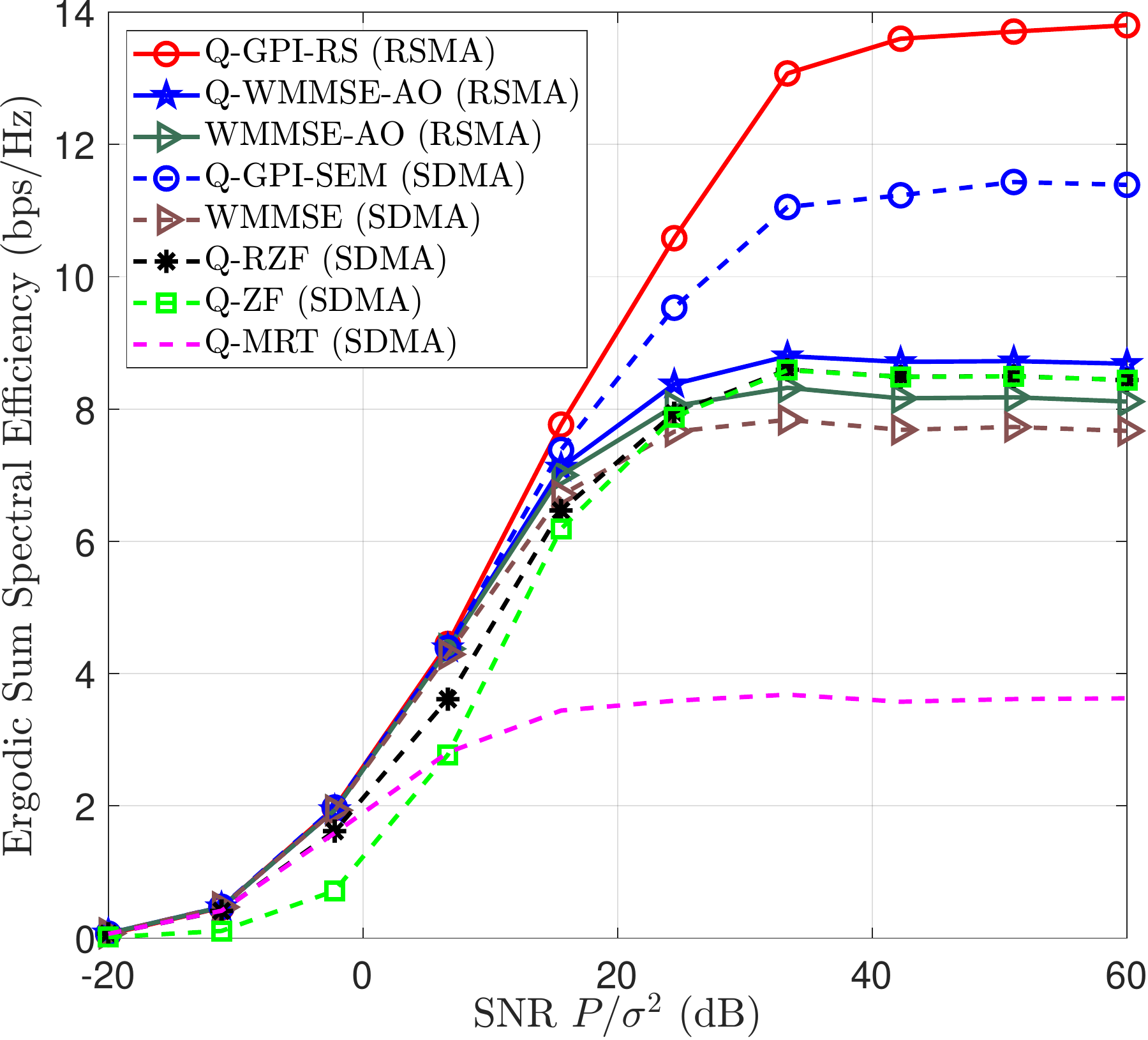}}}
	\end{subfigure}
	\begin{subfigure}[Spectral efficiency of Q-GPI-RS and Q-GPI-SEM]{\resizebox{0.45\columnwidth}{!}{\includegraphics{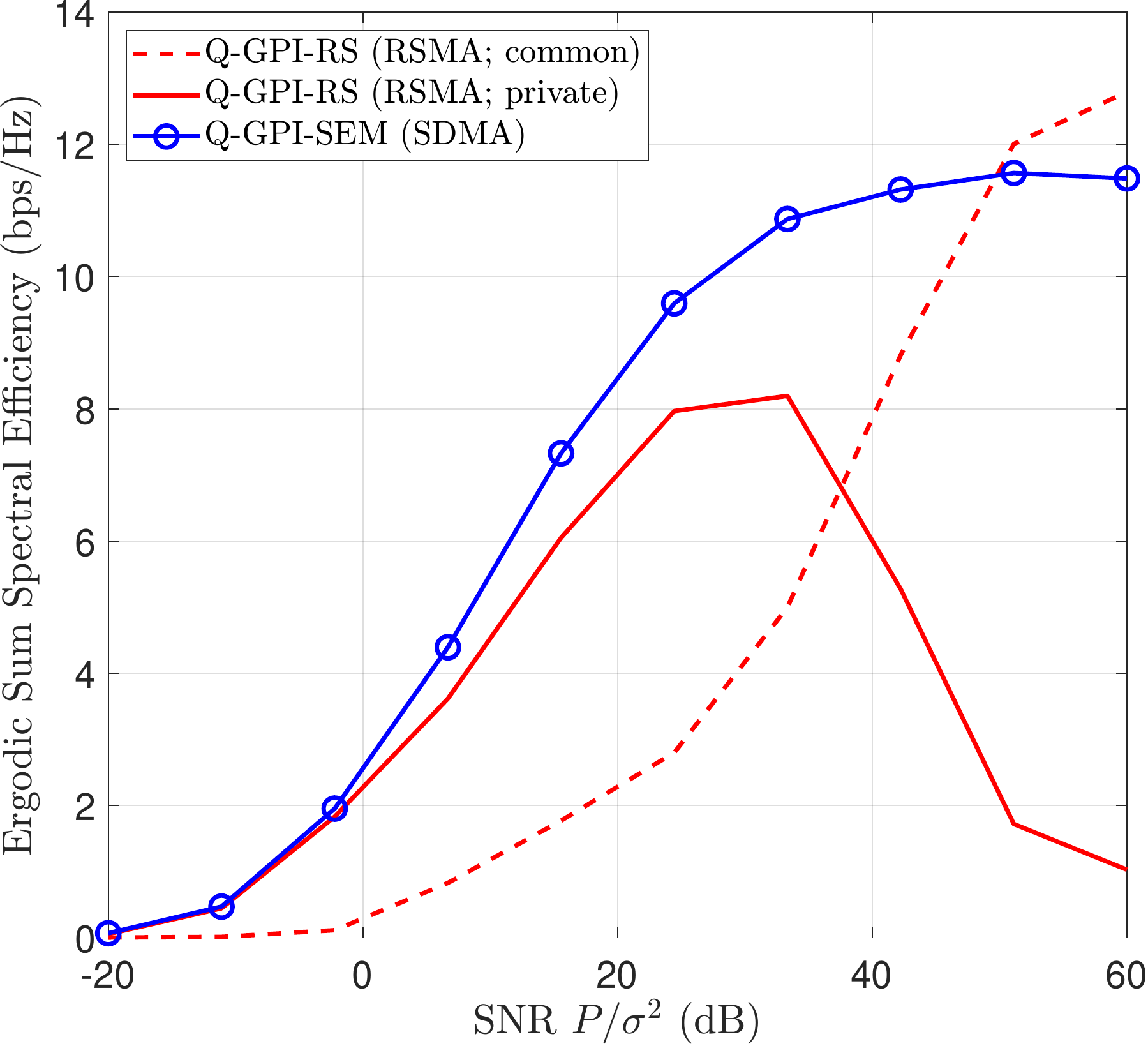}}}
	\end{subfigure}
	\begin{subfigure}[Antenna power ratios]{\resizebox{0.5\columnwidth}{!}{\includegraphics{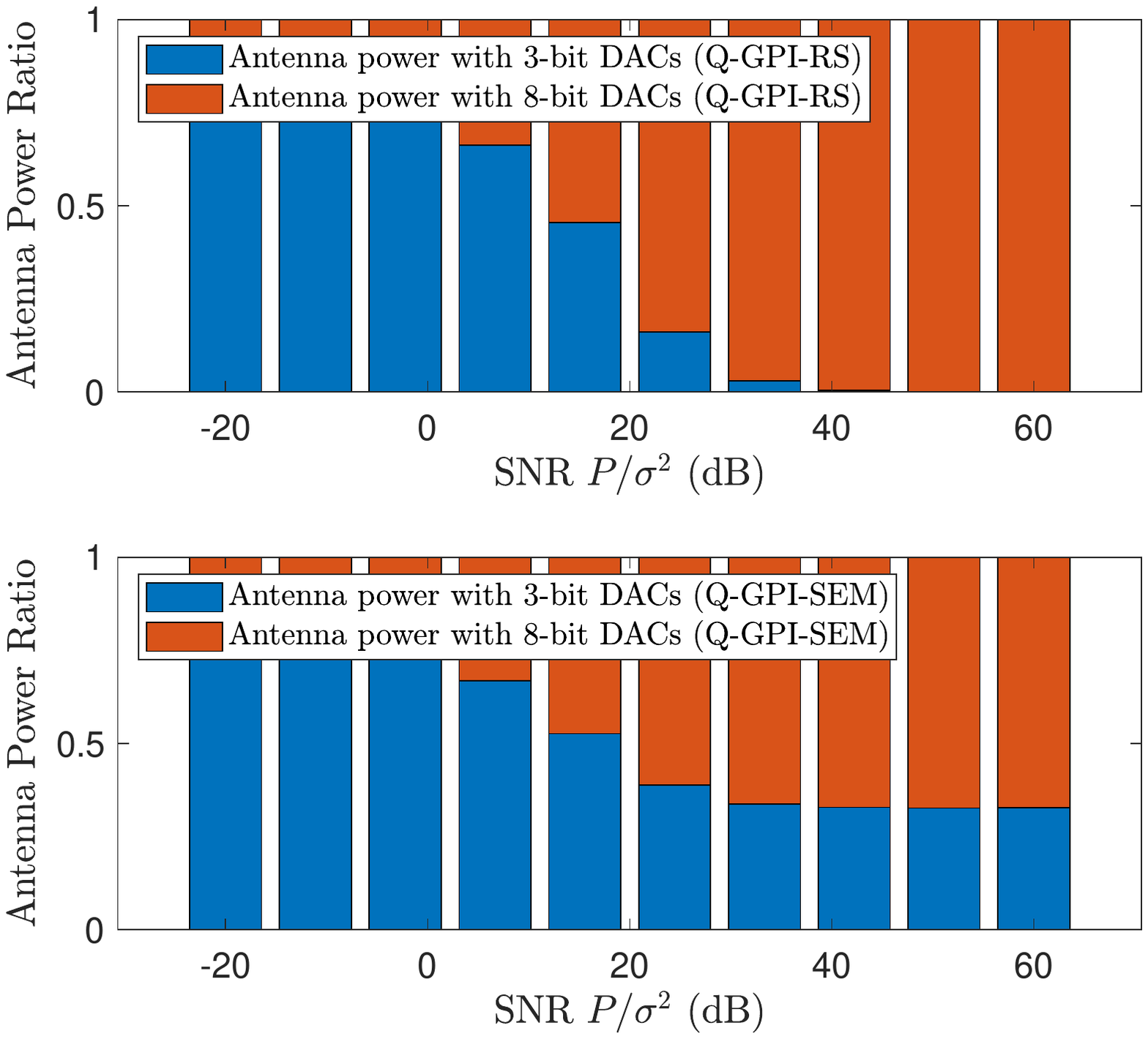}}}
    \end{subfigure}
	\caption{(a) The sum spectral efficiency, (b) the spectral efficiency of the common and private streams, and (c) the average antenna power ratio of Q-GPI-RS and Q-GPI-SEM versus SNR for $N = 4$ AP antennas, $ K = 2$ users, three DACs are equipped with $3$ bits and the other is equipped with $8$ bits, and $b_{{\sf ADC},k} = 8, \forall k$.} 
 	\label{fig:RS-SE-Hetero}
 	 	\vspace{-1em}
\end{figure}

In Fig.~\ref{fig:RS-SE-Hetero},  we consider the mixed DAC case where three DACs are equipped with $3$ bits and the other is equipped with $8$ bits for $N = 4$, $K = 2$, and $b_{{\sf ADC},k} = 8,\ \forall k$.
As shown in Fig.~\ref{fig:RS-SE-Hetero}(a), we note that Q-GPI-RS achieves the significant spectral efficiency improvement in medium-to-high SNR.
In particular, the gain of RSMA of Q-GPI-RS compared to  Q-GPI-SEM is achieved by assigning the high rate for the common stream as shown in Fig.~\ref{fig:RS-SE-Hetero}(b).
In addition, Fig.~\ref{fig:RS-SE-Hetero}(c) illustrates that  Q-GPI-RS concentrates the available transmit power on the antenna  with the high-resolution DAC (8 bits in this case) to prevent  the quantization error from increasing significantly as the SNR increases.
This trend leads to reducing the number of active antennas, and thus, the system can become overloaded.
In such a case,  RSMA transmission plays its key role by leveraging the common stream to deal with the overloaded system.
Q-GPI-SEM, on the other hand, allocates the transmit power to all antennas as shown in Fig.~\ref{fig:RS-SE-Hetero}(c). 
Accordingly, although  Q-GPI-SEM achieves the higher spectral efficiency than the other baselines, it still has performance limitation compared to Q-GPI-RS since  Q-GPI-SEM cannot avoid the large increase of quantization errors with the SNR. 
In this regard, it is concluded that RSMA is also highly beneficial in multiuser MIMO systems with mixed DACs.
\begin{figure}[!t]\centering
	\subfigure{\resizebox{0.5\columnwidth}{!}{\includegraphics{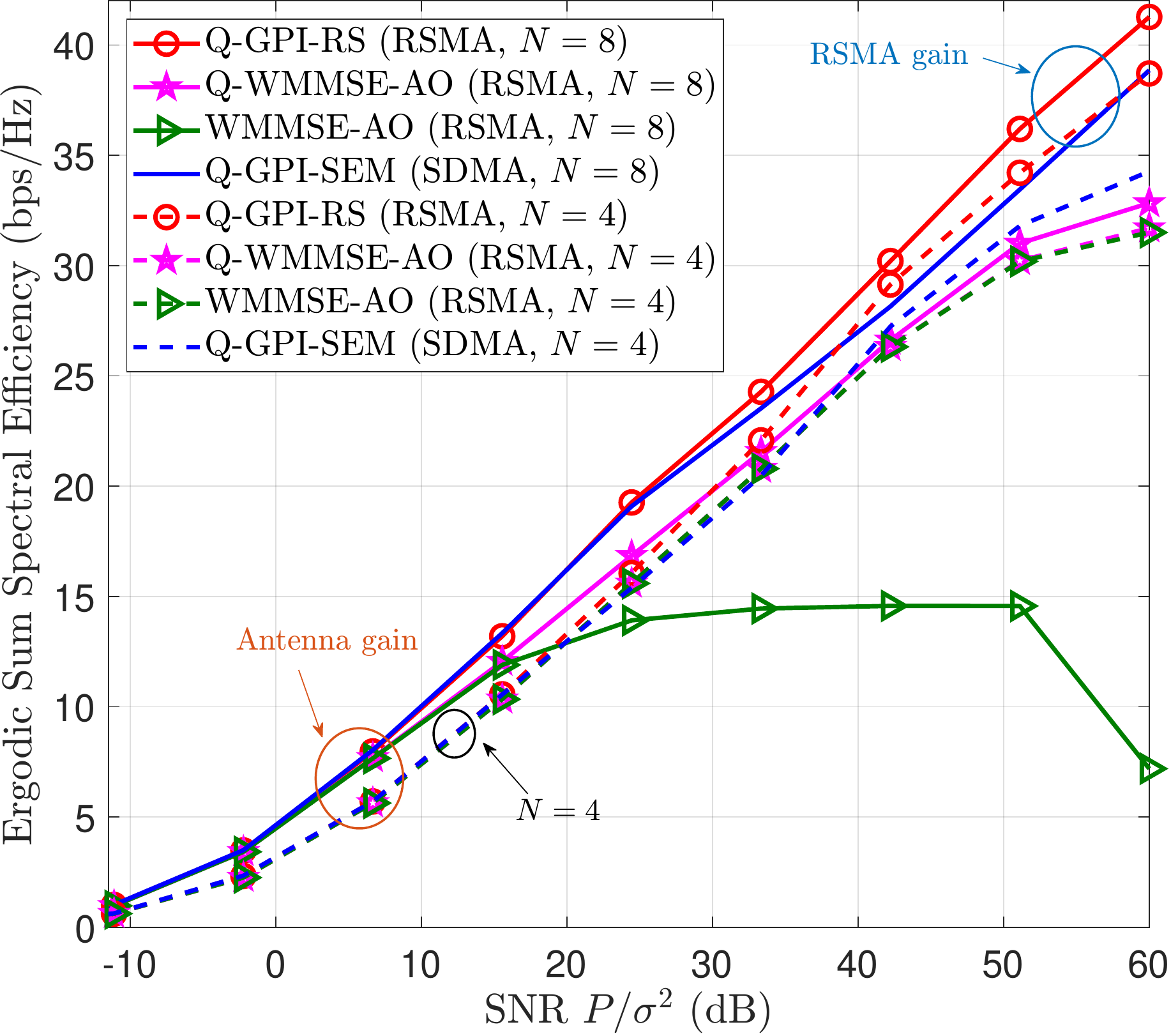}}} 	\vspace{-0.5em}
	\caption{The  sum spectral efficiency versus SNR for $N = 4$ AP antennas, $ K = 4$ users, $b_{{\sf DAC},n} = 10, \forall n$, and $b_{{\sf ADC},k} = 10, \forall k$ and for $N = 8$ AP antennas, $ K = 4$ users, half of the DACs are equipped with $3$ bits and the other half are equipped with $10$ bits, and $b_{{\sf ADC},k} = 10, \forall k$.} 
 	\label{fig:N8K4Comparison}
 	 	\vspace{-0.5em}
\end{figure}
Since the mixed ADC/DAC architecture which is a special case of the heterogeneous ADC/DAC systems has shown to offer high potential in maximizing the spectral efficiency with low power consumption in the MIMO systems \cite{zhang2018mixed}, we further evaluate the spectral efficiency versus the SNR for mixed DACs in which half of the DACs are equipped with $3$ bits and the other half are equipped with $10$ bits, $b_{{\sf ADC},k} = 10,\  \forall k$, $N = 8$, and $K = 4$.
In addition to the mixed DAC case, we plot the high-resolution DAC case for $N = 4$, $K = 4$, $b_{{\sf DAC},n} = 10,\ \forall n$, and $b_{{\sf ADC},k} = 10, \  \forall k$ for comparison.
As shown in Fig.~\ref{fig:N8K4Comparison}, the proposed Q-GPI-RS achieves the highest sum spectral efficiency in this environment. 
In particular, the gain of RSMA becomes larger as the SNR increases for both $N=8$ and $N=4$ cases.
We note that the spectral efficiency of $N=4$ case reduces its gap from that of $N=8$ case as the SNR increases.
This is due to the phenomenon that Q-GPI-RS with mixed-resolution DACs reduces the quantization error in the high SNR by allocating more power to antennas with high-resolution DACs.
Accordingly, in the high SNR, the effective number of antennas becomes the number of antennas with high-resolution DACs, which makes $N=8$ and $N=4$ cases similar, while increasing the gap from the SDMA-based approaches.
In the low-to-medium SNR, however, the gain of RSMA is small but there exists an antenna gain for $N=8$  compared to $N=4$   thanks to additional antennas with low-resolution DACs.
Therefore, using the proposed RSMA precoding method with mixed-resolution DACs offers antenna gain in the low-to-medium SNR with the small cost of deploying the low-power hardware, and also provides the  RSMA gain in the high SNR.

\begin{figure}[!t]\centering
	\subfigure{\resizebox{0.45\columnwidth}{!}{\includegraphics{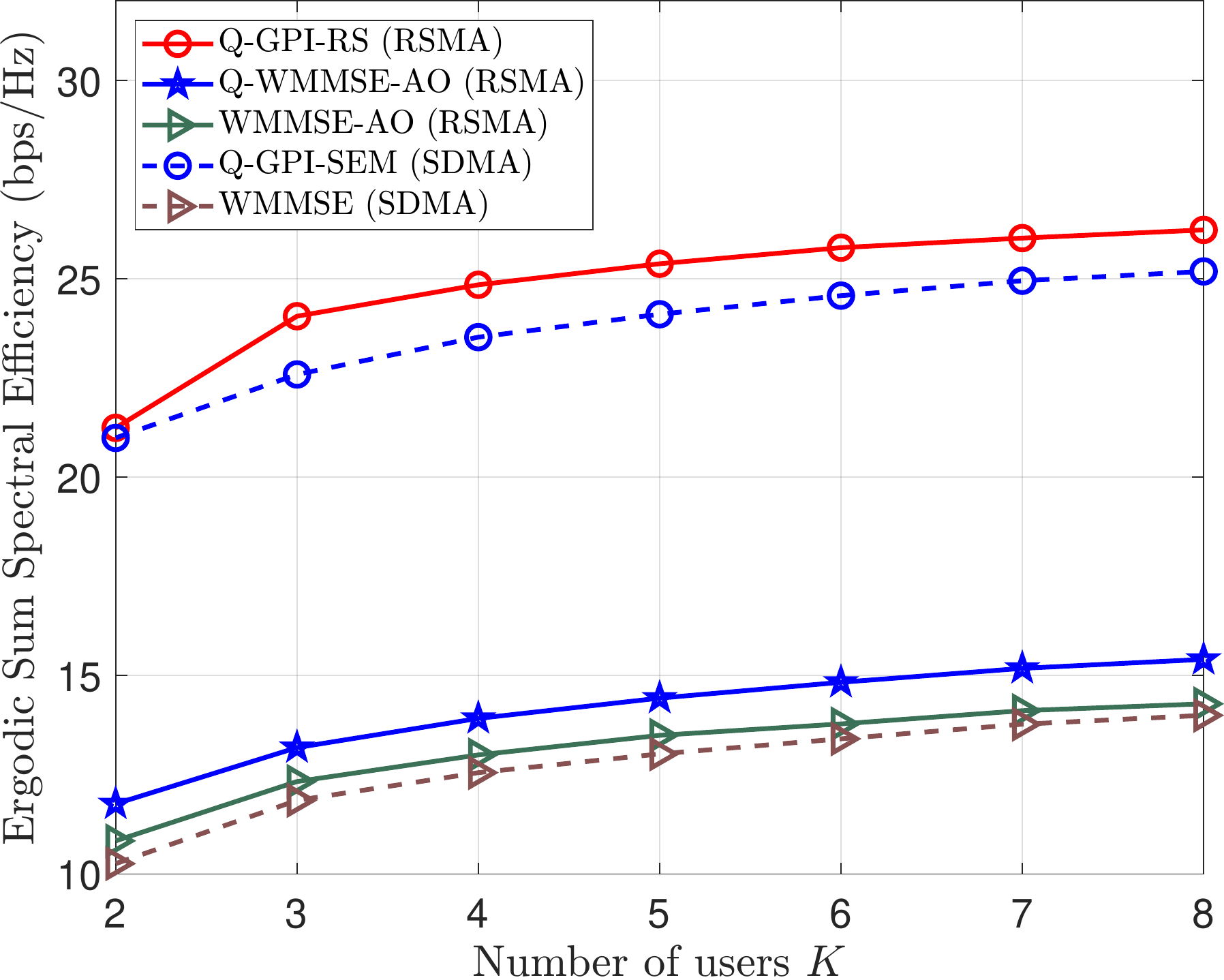}}}
	 	\vspace{-0.5em}
	\caption{The  sum spectral efficiency versus the number of users for $N = 8$ AP antennas,  the numbers of DAC bits are set uniformly randomly from 2 to 8 bits, and $b_{{\sf ADC},k} = 8, \forall k$, and SNR $= 40$ dB.} 
 	\label{fig:N8coCHSumSEforusersADC8}
 	 	\vspace{-1em}
\end{figure}

In Fig.~\ref{fig:N8coCHSumSEforusersADC8}, we evaluate the proposed algorithm with respect to the number of users for $N = 8$.
The number of  DAC bits is set uniformly randomly from 2 to 8 bits and $b_{{\sf ADC},k} = 8,\ \forall k$.
As shown in Fig.~\ref{fig:N8coCHSumSEforusersADC8}, the proposed Q-GPI-RS achieves the gain of RSMA compared to the Q-GPI-SEM as we increase the number of users.
The gain of RSMA is also observed when comparing WMMSE-AO with WMMSE.
The gain of RSMA seems to be similar over the considered number of users; the common rate is determined by the minimum rate of  RSMA users for the common stream, and thus, the gain of  RSMA cannot arbitrarily increase with the number of users.
Noticing that the other baseline methods, namely, Q-WMMSE-AO and WMMSE-AO, show large gaps from the proposed method even though they use RSMA transmission, we can conclude that the proposed method is able to properly design precoders for both the common stream and private streams by incorporating the quantization errors.


\begin{figure}[!t]\centering
	\begin{subfigure}[DAC resolution]{\resizebox{0.42\columnwidth}{!}{\includegraphics{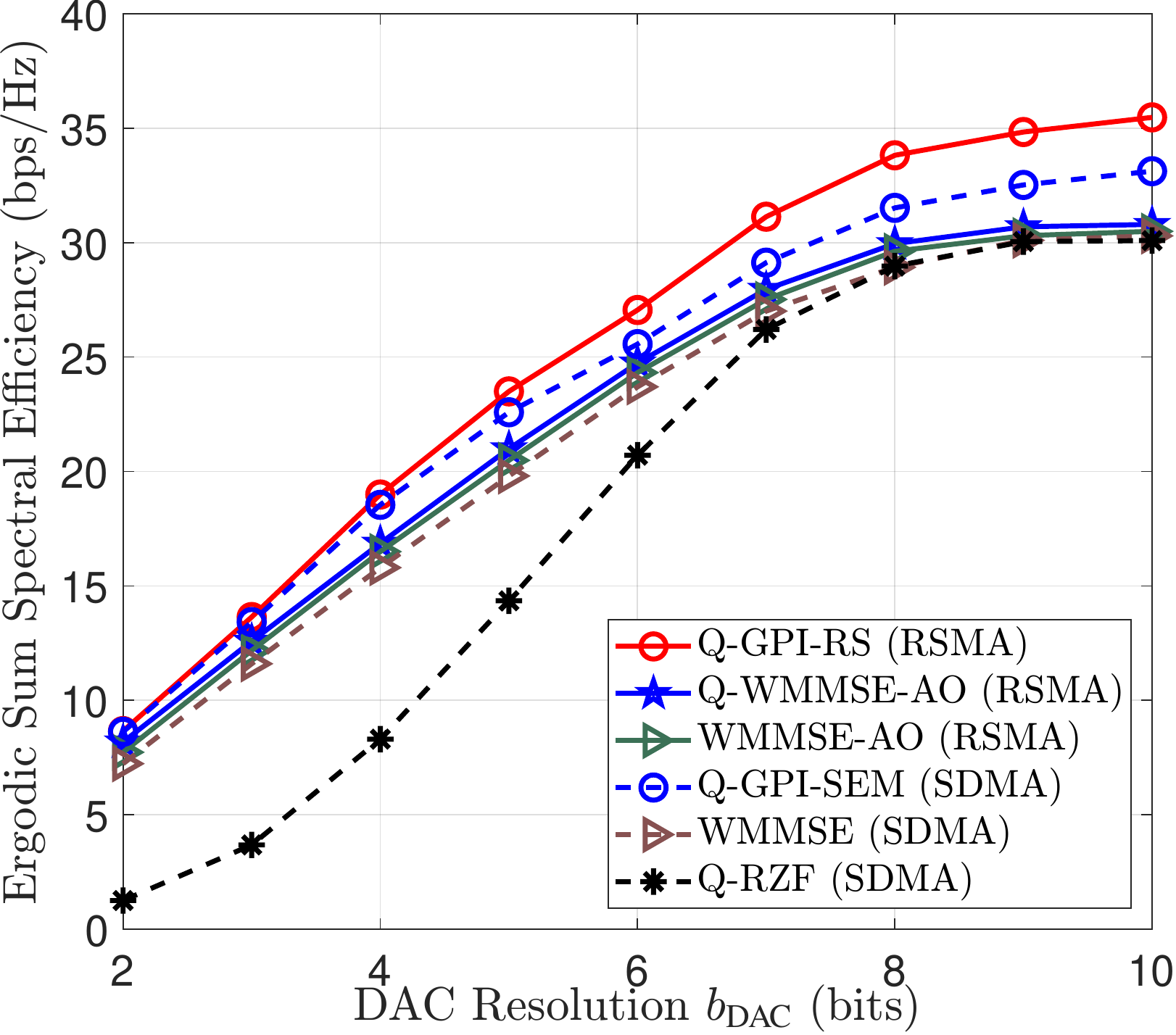}}}
	\end{subfigure}
	\begin{subfigure}[ADC resolution]{\resizebox{0.42\columnwidth}{!}{\includegraphics{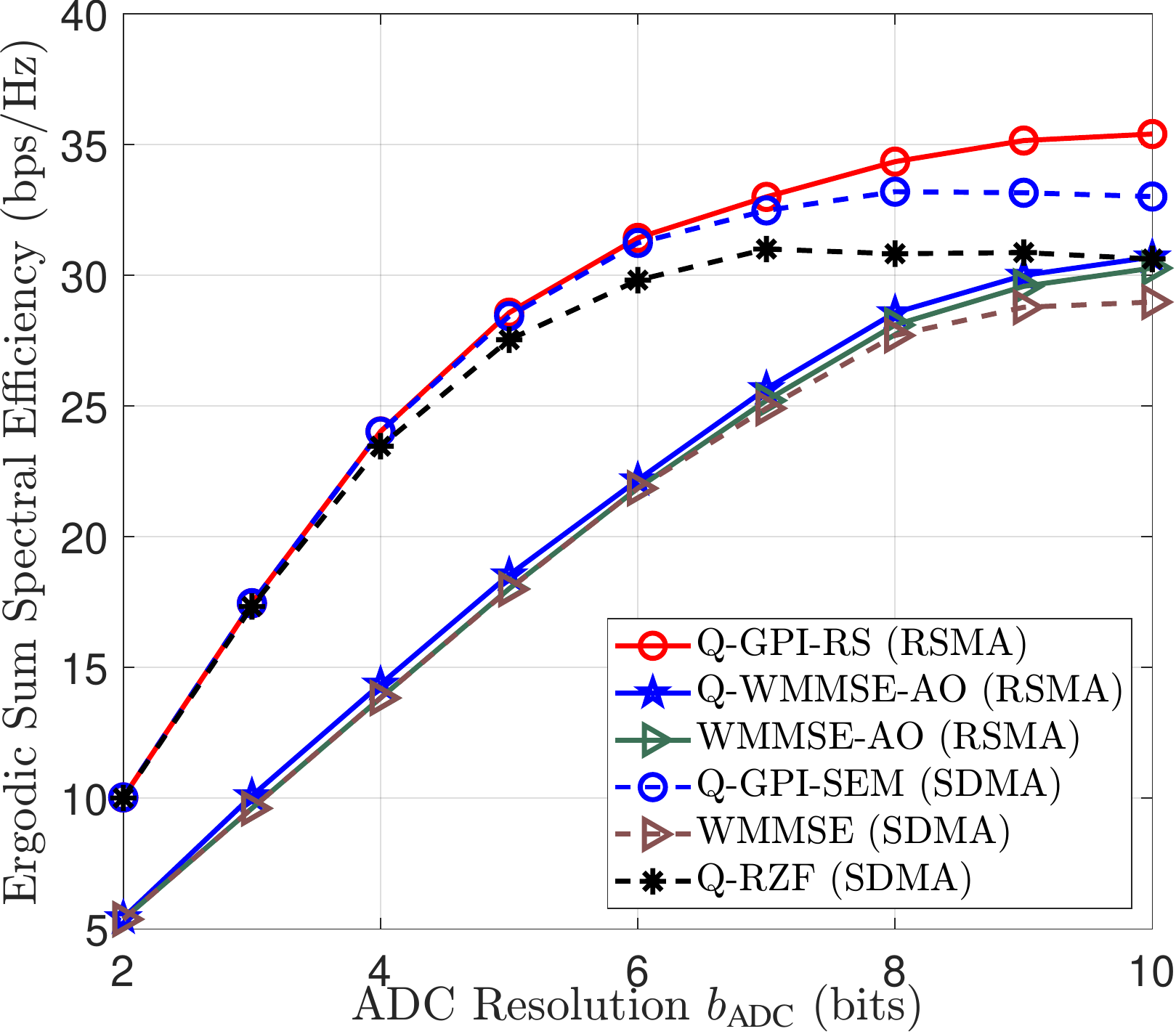}}}
	\end{subfigure}
	\caption{
	(a) The  sum spectral efficiency versus the number of DAC bits for $N = 8$ AP antennas, $ K = 4$ users,  $b_{{\sf ADC},k} = 10, \forall k$, and SNR $= 40$ dB and (b) the  sum spectral efficiency versus the number of  ADC bits for $N = 8$ AP antennas, $K = 4$ users,  $b_{{\sf DAC},n} = 10, \forall n$, and SNR $= 40$ dB.}
 	\label{fig:RS-SE-DAC-ADC}
 	 	\vspace{-1em}
\end{figure}

\subsection{Effect of DAC and ADC Quantization to RSMA}
\label{subsec:DACADCeffect}
Now, we analyze the effect of the number of DAC and ADC bits to verify the conjecture in Remark~\ref{rm:effect}.
In Fig.~\ref{fig:RS-SE-DAC-ADC}(a), we plot the spectral efficiency versus the number of DAC bits for $N = 8$, $K = 4$, and $b_{{\sf ADC},k} = 10,\ \forall k$.
We also evaluate the spectral efficiency versus the number of ADC bits for $N = 8$, $K = 4$, and $b_{{\sf DAC},n} = 10,\ \forall n$ in Fig.~\ref{fig:RS-SE-DAC-ADC}(b).
As shown in Fig.~\ref{fig:RS-SE-DAC-ADC}, the proposed Q-GPI-RS achieves the highest performance over the considered number of bits.
The gain of RSMA becomes larger as DAC and ADC resolutions increase because RSMA can allocate the higher rate to the common stream with the higher resolution, which corresponds to the observation in \cite{ahiadormey2021performance}.
In addition, the gain of RSMA is more sensitive to the ADC resolution than the DAC resolution since the rate of the common stream is more dominated by the number of ADC bits than the number of DAC bits, which confirms the analysis in Remark~\ref{rm:effect}.
It is concluded that the medium-to-high resolution DACs and high-resolution ADCs are required to fully realize the benefit of RSMA in the homogeneous ADC and DAC cases.

\section{Conclusions}
In this paper, we proposed a promising precoding algorithm for downlink RSMA systems with low-resolution quantizers to maximize the sum spectral efficiency.
Since the formulated spectral efficiency maximization problem is non-smooth,
we used the LogSumExp technique to convert the problem into a tractable form.
In addition, we further reformulated the non-convex spectral efficiency maximization problem to the product of the Rayleigh quotients for more tractability.
We then derived the first-order optimality condition to find the stationary point.
Interpreting the condition as the generalized eigenvalue problem, we developed a computationally efficient algorithm to find the best stationary point that maximizes the spectral efficiency.
Simulation results demonstrated that the proposed method achieved  the highest sum spectral efficiency compared to the baseline methods.
It was also observed that the gain of RSMA exists in most DAC and ADC resolutions.
In particular, the gain of RSMA increased with the resolutions of DACs and ADCs because the common stream can have a higher rate with the higher resolutions.
Furthermore, Q-GPI-RS attained a significant gain of RSMA in heterogeneous DACs by decreasing the allocated transmit power for the antennas with low-resolution DACs as a means of  suppressing the quantization error, which shows the high flexibility of RSMA in using antennas with different DAC resolutions.
Based on the observations, we confirmed that RSMA benefits coarse quantization systems and that the proposed RSMA precoding algorithm significantly improves the spectral efficiency, offering high adaptation to the heterogeneous quantization bits thanks to RSMA.
Therefore, the proposed algorithm can provide benefits in both increasing the communication efficiency and designing low-power transceiver architectures for future wireless communication systems.
\appendices
\section{Proof of Lemma \ref{lem:main}} \label{proof:lem1}
From the reformulated optimization problem, the Lagrangian function is defined as
\begin{align}
    \label{Largaragianfuction}
    L(\bar {\bf{w}})&=\ln \left(\frac{1}{K}  \sum_{k = 1}^{K}  \exp \left(\log_2 \left( \frac{\bar {\bf{w}}^{\sf H} {\bf{A}}_{{\sf c},k} \bar {\bf{w}}}{\bar {\bf{w}}^{\sf H} {\bf{B}}_{{\sf c},k}  \bar {\bf{w}} }  \right)^{-\frac{1}{\tau}} \right)  \right)^{-\tau}\!+ 
    \sum_{k = 1}^{K}  \frac{1}{\ln2}\ln \left( \frac{\bar {\bf{w}}^{\sf H} {\bf{A}}_k \bar {\bf{w}}}{\bar {\bf{w}}^{\sf H} {\bf{B}}_k \bar {\bf{w}}}\right).
\end{align}
Then, we compute partial derivatives of the Lagrangian function \eqref{Largaragianfuction} and find the stationarity condition by setting \eqref{Largaragianfuction} to zero.
Let us denote the first term and second term in \eqref{Largaragianfuction} as $L_1(\bar\bw)$ and $L_2(\bar\bw)$, respectively.
The partial derivative of $L_1(\bar {\bf{w}})$ is represented as
\begin{align}
    \label{eq:L1_d}
    \frac{\partial L_1(\bar {\bf{w}})}{\partial \bar     {\bf{w}}^{\sf H}}
    = \frac{1}{\ln 2} \sum_{k = 1}^{K}   \Bigg[ \frac{\exp\left( -\frac{1}{\tau}  \log_2\left(\frac{\bar {\bf{w}}^{\sf H} {\bf{A}}_{{\sf c},k} \bar {\bf{w}}}{\bar {\bf{w}}^{\sf H} {\bf{B}}_{{\sf c},k} \bar {\bf{w}} } \right) \right)}{\sum_{\ell = 1}^{K} \exp\left( -\frac{1}{\tau} \log_2\left(\frac{\bar {\bf{w}}^{\sf H} {\bf{A}}_{{\sf c},\ell} \bar {\bf{w}}}{\bar {\bf{w}}^{\sf H} {\bf{B}}_{{\sf c},\ell} \bar {\bf{w}} } \right)\right)} \left\{\frac{{\bf{A}}_{{\sf c},k} \bar {\bf{w}}}{\bar {\bf{w}}^{\sf H} {\bf{A}}_{{\sf c},k} \bar {\bf{w}}} - \frac{{\bf{B}}_{{\sf c},k} \bar {\bf{w}}}{\bar {\bf{w}}^{\sf H} {\bf{B}}_{{\sf c},k} \bar {\bf{w}}} \right\}  \Bigg].
\end{align}
The derivative of the Lagrangian function $L_2(\bar {\bf{w}})$ is computed as
\begin{align} 
    \label{eq:L2_d}
    \frac{\partial L_2 (\bar {\bf{w}})}{\partial \bar {\bf{w}}^{\sf H}} 
    =&\frac{1}{\ln 2} \sum_{k = 1}^{K} \left[\frac{{\bf{A}}_k \bar {\bf{w}}}{\bar {\bf{w}}^{\sf H} {\bf{A}}_k \bar {\bf{w}}} - \frac{{\bf{B}}_k \bar {\bf{w}}}{\bar {\bf{w}}^{\sf H} {\bf{B}}_k \bar {\bf{w}}} \right].
\end{align}
Using  \eqref{eq:L1_d} and \eqref{eq:L2_d}, the first-order optimality condition is given as
\begin{align}
    \label{eq:kkt_inproof}
    &\frac{\partial L_1(\bar {\bf{w}})}{\partial \bar {\bf{w}}^{\sf H}} + \frac{\partial L_2(\bar {\bf{w}})}{\partial \bar {\bf{w}}^{\sf H}}  
    \\ 
    \label{eq:L_d}
    & = 
    \sum_{k = 1}^{K}   \Bigg[\! \frac{\exp\left( -\frac{1}{\tau}  \log_2\left(\frac{\bar {\bf{w}}^{\sf H} {\bf{A}}_{{\sf c},k} \bar {\bf{w}}}{\bar {\bf{w}}^{\sf H} {\bf{B}}_{{\sf c},k} \bar {\bf{w}} } \right) \right)}{\sum_{\ell = 1}^{K} \exp\left( -\frac{1}{\tau} \log_2\left(\frac{\bar {\bf{w}}^{\sf H} {\bf{A}}_{{\sf c},\ell} \bar {\bf{w}}}{\bar {\bf{w}}^{\sf H} {\bf{B}}_{{\sf c},\ell} \bar {\bf{w}} } \right)\right)} \left\{\frac{{\bf{A}}_{{\sf c},k} \bar {\bf{w}}}{\bar {\bf{w}}^{\sf H} {\bf{A}}_{{\sf c},k} \bar {\bf{w}}}\! -\! \frac{{\bf{B}}_{{\sf c},k} \bar {\bf{w}}}{\bar {\bf{w}}^{\sf H} {\bf{B}}_{{\sf c},k} \bar {\bf{w}}}\! \right\}  \!\!\Bigg] + \sum_{k = 1}^{K} \left[\frac{{\bf{A}}_k \bar {\bf{w}}}{\bar {\bf{w}}^{\sf H} {\bf{A}}_k \bar {\bf{w}}} - \frac{{\bf{B}}_k \bar {\bf{w}}}{\bar {\bf{w}}^{\sf H} {\bf{B}}_k \bar {\bf{w}}} \right] 
    \\
    &= 0.
\end{align}
Rearranging \eqref{eq:L_d}, we derive ${\bf{A}}_{\sf KKT}(\bar {\bf{w}}) \bar {\bf{w}} = \lambda(\bar {\bf{w}}) {\bf{B}}_{\sf KKT} (\bar {\bf{w}}) \bar {\bf{w}}$.
Since ${\bf B}_{{\sf c},k}$ and ${\bf B}_{k}$ are Hermitian block diagonal matrices, ${\bf B}_{\sf KKT}$ is invertible.
Accordingly, we finally obtain the condition in \eqref{eq:lem_kkt_stack}.
This completes the proof.
\qed

\bibliographystyle{IEEEtran}
\bibliography{draft_final}

\begin{thebibliography}{10}
\providecommand{\url}[1]{#1}
\csname url@samestyle\endcsname
\providecommand{\newblock}{\relax}
\providecommand{\bibinfo}[2]{#2}
\providecommand{\BIBentrySTDinterwordspacing}{\spaceskip=0pt\relax}
\providecommand{\BIBentryALTinterwordstretchfactor}{4}
\providecommand{\BIBentryALTinterwordspacing}{\spaceskip=\fontdimen2\font plus
\BIBentryALTinterwordstretchfactor\fontdimen3\font minus
  \fontdimen4\font\relax}
\providecommand{\BIBforeignlanguage}[2]{{%
\expandafter\ifx\csname l@#1\endcsname\relax
\typeout{** WARNING: IEEEtran.bst: No hyphenation pattern has been}%
\typeout{** loaded for the language `#1'. Using the pattern for}%
\typeout{** the default language instead.}%
\else
\language=\csname l@#1\endcsname
\fi
#2}}
\providecommand{\BIBdecl}{\relax}
\BIBdecl

\bibitem{yang20196g}
P.~Yang, Y.~Xiao, M.~Xiao, and S.~Li, ``{6G Wireless Communications: Vision and
  Potential Techniques},'' \emph{Proc. of Int. Symp. on Model. and Opt. in
  Mobile, Ad Hoc, and Wireless Networks}, vol.~33, no.~4, pp. 70--75, 2019.

\bibitem{pattar2018searching}
S.~Pattar, R.~Buyya, K.~R. Venugopal, S.~Iyengar, and L.~Patnaik, ``{Searching
  for the IoT Resources: Fundamentals, Requirements, Comprehensive Review, and
  Future Directions},'' \emph{IEEE Commun. Mag.}, vol.~20, no.~3, pp.
  2101--2132, 2018.

\bibitem{minoli2017iot}
D.~Minoli, K.~Sohraby, and B.~Occhiogrosso, ``{IoT Considerations,
  Requirements, and Architectures for Smart Buildings—Energy Optimization and
  Next-Generation Building Management Systems},'' \emph{IEEE Internet of Things
  J.}, vol.~4, no.~1, pp. 269--283, 2017.

\bibitem{zhang2018low}
J.~Zhang, L.~Dai, X.~Li, Y.~Liu, and L.~Hanzo, ``{On Low-Resolution ADCs in
  Practical 5G Millimeter-Wave Massive MIMO Systems},'' \emph{IEEE Trans.
  Commun.}, vol.~56, no.~7, pp. 205--211, 2018.

\bibitem{zhang2016mixed}
T.-C. Zhang, C.-K. Wen, S.~Jin, and T.~Jiang, ``{Mixed-ADC Massive MIMO
  Detectors: Performance Analysis and Design Optimization},'' \emph{IEEE Trans.
  Wireless Commun.}, vol.~15, no.~11, pp. 7738--7752, 2016.

\bibitem{liang2016mixed}
N.~Liang and W.~Zhang, ``{Mixed-ADC Massive MIMO},'' \emph{IEEE J. Sel. Areas
  Commun.}, vol.~34, no.~4, pp. 983--997, 2016.

\bibitem{zhang2018mixed}
J.~Zhang, L.~Dai, Z.~He, B.~Ai, and O.~A. Dobre, ``{Mixed-ADC/DAC Multipair
  Massive MIMO Relaying Systems: Performance Analysis and Power
  Optimization},'' \emph{IEEE Trans. Commun.}, vol.~67, no.~1, pp. 140--153,
  2018.

\bibitem{choi:commmag:20}
J.~Choi, G.~Lee, A.~Alkhateeb, A.~Gatherer, N.~Al-Dhahir, and B.~L. Evans,
  ``{Advanced Receiver Architectures for Millimeter-Wave Communications with
  Low-Resolution {ADCs}},'' \emph{IEEE Commun. Mag.}, vol.~58, no.~8, pp.
  42--48, 2020.

\bibitem{mao2018rate_bri}
Y.~Mao, B.~Clerckx, and V.~O. Li, ``{Rate-Splitting Multiple Access for
  Downlink Communication Systems: Bridging, Generalizing, and Outperforming
  SDMA and NOMA},'' \emph{EURASIP Jour. Wireless Comm. and Networking}, vol.
  2018, no.~1, pp. 1--54, 2018.

\bibitem{joudeh:16:tsp}
H.~{Joudeh} and B.~{Clerckx}, ``{Robust Transmission in Downlink Multiuser
  {MISO} Systems: {A} Rate-Splitting Approach},'' \emph{IEEE Trans. Signal
  Process.}, vol.~64, no.~23, pp. 6227--6242, 2016.

\bibitem{dai:twc:16}
M.~{Dai}, B.~{Clerckx}, D.~{Gesbert}, and G.~{Caire}, ``{A Rate Splitting
  Strategy for Massive {MIMO} with Imperfect {CSIT}},'' \emph{IEEE Trans.
  Wireless Commun.}, vol.~15, no.~7, pp. 4611--4624, Jul. 2016.

\bibitem{li:jsac:20}
Z.~{Li}, C.~{Ye}, Y.~{Cui}, S.~{Yang}, and S.~{Shamai}, ``{Rate Splitting for
  Multi-Antenna Downlink: {Precoder} Design and Practical Implementation},''
  \emph{IEEE J. Sel. Areas Commun.}, vol.~38, no.~8, pp. 1910--1924, 2020.

\bibitem{clerckx2016rate}
B.~Clerckx, H.~Joudeh, C.~Hao, M.~Dai, and B.~Rassouli, ``{Rate Splitting for
  MIMO Wireless Networks: A Promising PHY-Layer Strategy for LTE Evolution},''
  \emph{IEEE Commun. Mag.}, vol.~54, no.~5, pp. 98--105, 2016.

\bibitem{singh2009limits}
J.~Singh, O.~Dabeer, and U.~Madhow, ``{On the Limits of Communication with
  Low-Precision Analog-to-Digital Conversion at the Receiver},'' \emph{IEEE
  Trans. Commun.}, vol.~57, no.~12, pp. 3629--3639, 2009.

\bibitem{mo2015capacity}
J.~Mo and R.~W. Heath, ``{Capacity Analysis of One-Bit Quantized MIMO Systems
  with Transmitter Channel State Information},'' \emph{IEEE Trans. Signal
  Process.}, vol.~63, no.~20, pp. 5498--5512, 2015.

\bibitem{mezghani2012capacity}
A.~Mezghani and J.~A. Nossek, ``{Capacity Lower Bound of MIMO Channels with
  Output Quantization and Correlated Noise},'' in \emph{Proc. IEEE Int. Symp.
  Info. Th.}, 2012, pp. 1--5.

\bibitem{orhan2015low}
O.~Orhan, E.~Erkip, and S.~Rangan, ``{Low Power Analog-to-Digital Conversion in
  Millimeter Wave Systems: Impact of Resolution and Bandwidth on
  Performance},'' in \emph{Proc. Info. Th. and Appl. Workshop}.\hskip 1em plus
  0.5em minus 0.4em\relax IEEE, 2015, pp. 191--198.

\bibitem{jacobsson2017quantized}
S.~Jacobsson, G.~Durisi, M.~Coldrey, T.~Goldstein, and C.~Studer, ``{Quantized
  Precoding for Massive MU-MIMO},'' \emph{IEEE Trans. Commun.}, vol.~65,
  no.~11, pp. 4670--4684, 2017.

\bibitem{choi2021energyIOTJ}
J.~Choi, J.~Park, and N.~Lee, ``{Energy Efficiency Maximization Precoding for
  Quantized Massive MIMO Systems},'' \emph{IEEE Trans. Wireless Commun.}, 2021.

\bibitem{wang2018finite}
C.-J. Wang, C.-K. Wen, S.~Jin, and S.-H. Tsai, ``{Finite-Alphabet Precoding for
  Massive MU-MIMO with Low-Resolution DACs},'' \emph{IEEE Trans. Wireless
  Commun.}, vol.~17, no.~7, pp. 4706--4720, 2018.

\bibitem{pirzadeh2018spectral}
H.~Pirzadeh and A.~L. Swindlehurst, ``{Spectral Efficiency of Mixed-ADC Massive
  MIMO},'' \emph{IEEE Trans. Signal Process.}, vol.~66, no.~13, pp. 3599--3613,
  2018.

\bibitem{choi2016near}
J.~Choi, J.~Mo, and R.~W. Heath, ``{Near Maximum-Likelihood Detector and
  Channel Estimator for Uplink Multiuser Massive MIMO Systems with One-Bit
  ADCs},'' \emph{IEEE Trans. Commun.}, vol.~64, no.~5, pp. 2005--2018, 2016.

\bibitem{li2017channel}
Y.~Li, C.~Tao, G.~Seco-Granados, A.~Mezghani, A.~L. Swindlehurst, and L.~Liu,
  ``{Channel Estimation and Performance Analysis of One-Bit Massive MIMO
  Systems},'' \emph{IEEE Trans. Signal Process.}, vol.~65, no.~15, pp.
  4075--4089, 2017.

\bibitem{jeon2018supervised}
Y.~Jeon, N.~Hong, and N.~Lee, ``{Supervised-Learning-Aided Communication
  Framework for MIMO Systems with Low-Resolution ADCs},'' \emph{IEEE Trans.
  Veh. Technol.}, vol.~67, no.~8, pp. 7299--7313, 2018.

\bibitem{choi2019robust}
J.~Choi, Y.~Cho, B.~L. Evans, and A.~Gatherer, ``{Robust Learning-Based ML
  Detection for Massive MIMO Systems with One-Bit Quantized Signals},'' in
  \emph{Proc. IEEE Glob. Comm. Conf.}\hskip 1em plus 0.5em minus 0.4em\relax
  IEEE, 2019, pp. 1--6.

\bibitem{joudeh2016sum}
H.~Joudeh and B.~Clerckx, ``{Sum-Rate Maximization for Linearly Precoded
  Downlink Multiuser MISO Systems with Partial CSIT: A Rate-Splitting
  Approach},'' \emph{IEEE Trans. Commun.}, vol.~64, no.~11, pp. 4847--4861,
  2016.

\bibitem{han:tit:81}
{Te Han} and K.~{Kobayashi}, ``{A New Achievable Rate Region for the
  Interference Channel},'' \emph{IEEE Trans. Inf. Theory}, vol.~27, no.~1, pp.
  49--60, 1981.

\bibitem{etkin:tit:08}
R.~H. {Etkin}, D.~N.~C. {Tse}, and H.~{Wang}, ``{Gaussian Interference Channel
  Capacity to within One Bit},'' \emph{IEEE Trans. Inf. Theory}, vol.~54,
  no.~12, pp. 5534--5562, 2008.

\bibitem{yang:13:tit}
S.~{Yang}, M.~{Kobayashi}, D.~{Gesbert}, and X.~{Yi}, ``{Degrees of Freedom of
  Time Correlated {MISO} Broadcast Channel with Delayed {CSIT}},'' \emph{IEEE
  Trans. Inf. Theory}, vol.~59, no.~1, pp. 315--328, 2013.

\bibitem{hao:tcom:15}
C.~{Hao}, Y.~{Wu}, and B.~{Clerckx}, ``{Rate Analysis of Two-Receiver {MISO}
  Broadcast Channel with Finite Rate Feedback: A Rate-Splitting Approach},''
  \emph{IEEE Trans. Commun.}, vol.~63, no.~9, pp. 3232--3246, 2015.

\bibitem{chris:twc:08}
S.~S. {Christensen}, R.~{Agarwal}, E.~D. {Carvalho}, and J.~M. {Cioffi},
  ``{Weighted Sum-Rate Maximization Using Weighted {MMSE} for {MIMO-BC}
  Beamforming Design},'' \emph{IEEE Trans. Wireless Commun.}, vol.~7, no.~12,
  pp. 4792--4799, Dec. 2008.

\bibitem{yang:icc:20}
Z.~Yang, M.~Chen, W.~Saad, and M.~Shikh-Bahaei, ``{Downlink Sum-Rate
  Maximization for Rate Splitting Multiple Access (RSMA)},'' in \emph{Proc.
  IEEE Int. Conf. Comm.}, 2020, pp. 1--6.

\bibitem{park2021rate}
J.~Park, J.~Choi, N.~Lee, W.~Shin, and H.~V. Poor, ``{Rate-Splitting Multiple
  Access for Downlink MIMO: A Generalized Power Iteration Approach},''
  \emph{https://arxiv.org/abs/2108.06844}, 2021.

\bibitem{mao2022rate}
Y.~Mao, O.~Dizdar, B.~Clerckx, R.~Schober, P.~Popovski, and H.~V. Poor,
  ``{Rate-Splitting Multiple Access: Fundamentals, Survey, and Future Research
  Trends},'' \emph{https://arxiv.org/abs/2201.03192}, 2022.

\bibitem{ahiadormey2021performance}
R.~K. Ahiadormey and K.~Choi, ``{Performance Analysis of Rate Splitting in
  Massive MIMO Systems with Low Resolution ADCs/DACs},'' \emph{Applied
  Sciences}, vol.~11, no.~20, p. 9409, 2021.

\bibitem{dizdar2021rate}
O.~Dizdar, A.~Kaushik, B.~Clerckx, and C.~Masouros, ``{Rate-Splitting Multiple
  Access for Joint Radar-Communications with Low-Resolution DACs},'' in
  \emph{Proc. IEEE Int. Conf. Comm.}\hskip 1em plus 0.5em minus 0.4em\relax
  IEEE, 2021, pp. 1--6.

\bibitem{dizdar2022energy}
------, ``{Energy Efficient Dual-Functional Radar-Communication: Rate-Splitting
  Multiple Access, Low-Resolution DACs, and RF Chain Selection},''
  \emph{https://arxiv.org/abs/2202.09128}, 2022.

\bibitem{cai:siam:18}
Y.~Cai, L.-H. Zhang, Z.~Bai, and R.-C. Li, ``{On an Eigenvector-Dependent
  Nonlinear Eigenvalue Problem},'' \emph{{SIAM J. Matrix Anal. Appl.}},
  vol.~39, no.~3, pp. 1360--1382, 2018.

\bibitem{clerckx2019rate}
B.~Clerckx, Y.~Mao, R.~Schober, and H.~V. Poor, ``{Rate-Splitting Unifying
  SDMA, OMA, NOMA, and Multicasting in MISO Broadcast Channel: A Simple
  Two-User Rate Analysis},'' \emph{IEEE Wireless Commun. Lett.}, vol.~9, no.~3,
  pp. 349--353, 2019.

\bibitem{fletcher2007robust}
A.~K. Fletcher, S.~Rangan, V.~K. Goyal, and K.~Ramchandran, ``{Robust
  Predictive Quantization: Analysis and Design via Convex Optimization},''
  \emph{IEEE Trans. Signal Process.}, vol.~1, no.~4, pp. 618--632, 2007.

\bibitem{fan2015uplink}
L.~Fan, S.~Jin, C.-K. Wen, and H.~Zhang, ``{Uplink Achievable Rate for Massive
  MIMO Systems with Low-Resolution ADC},'' \emph{IEEE Commun. Lett.}, vol.~19,
  no.~12, pp. 2186--2189, 2015.

\bibitem{choi2018spatial}
J.~Choi, Y.~Nam, and N.~Lee, ``{Spatial Lattice Modulation for MIMO Systems},''
  \emph{IEEE Trans. Signal Process.}, vol.~66, no.~12, pp. 3185--3198, 2018.

\bibitem{shen2010dual}
C.~Shen and H.~Li, ``{On the Dual Formulation of Boosting Algorithms},''
  \emph{IEEE Trans. Pattern Anal. Mach. Intell.}, vol.~32, no.~12, pp.
  2216--2231, 2010.

\bibitem{choi:twc:20}
J.~{Choi}, N.~{Lee}, S.~{Hong}, and G.~{Caire}, ``{Joint User Selection, Power
  Allocation, and Precoding Design with Imperfect {CSIT} for Multi-Cell
  {MU-MIMO} Downlink Systems},'' \emph{IEEE Trans. Wireless Commun.}, vol.~19,
  no.~1, pp. 162--176, 2020.

\bibitem{joude:twc:17}
H.~{Joudeh} and B.~{Clerckx}, ``{Rate-Splitting for Max-Min Fair Multigroup
  Multicast Beamforming in Overloaded Systems},'' \emph{IEEE Trans. Wireless
  Commun.}, vol.~16, no.~11, pp. 7276--7289, Nov. 2017.

\bibitem{adhi:tit:13}
A.~{Adhikary}, J.~{Nam}, J.~{Ahn}, and G.~{Caire}, ``{Joint Spatial Division
  and Multiplexing - The Large-Scale Array Regime},'' \emph{IEEE Trans. Inf.
  Theory}, vol.~59, no.~10, pp. 6441--6463, Oct. 2013.

\bibitem{zafari2019survey}
F.~Zafari, A.~Gkelias, and K.~K. Leung, ``{A Survey of Indoor Localization
  Systems and Technologies},'' \emph{IEEE Comm. Surveys Tuts}, vol.~21, no.~3,
  pp. 2568--2599, 2019.

\end{thebibliography}

\end{document}